\newtheorem{lemma}{Lemma}
\newtheorem{definition}{Definition}
\newtheorem{theorem}{Theorem}
\newtheorem{proposition}{Proposition}
\newtheorem{example}{Example}
\def\type{type}
\def\term{term}
\def\termSPLIT{te&rm}
\DeclareMathOperator{\app}{app}
\newcommand{\transl}[1]{\ulcorner #1 \urcorner}
\DeclareMathOperator{\mvar}{mvar}
\DeclareMathOperator{\mtvar}{mtvar}
\DeclareMathOperator{\sig}{\mathcal{S}}
\DeclareMathOperator{\bool}{\mathtt{bool}}
\DeclareMathOperator{\ttb}{\mathtt{tt}}
\DeclareMathOperator{\ffb}{\mathtt{ff}}
\DeclareMathOperator{\eqb}{{\equiv_{\bool}}}
\DeclareMathOperator{\elimeqb}{{\mathtt{elim}_{\eqb}}}
\DeclareMathOperator{\A}{\mathtt{A}}
\DeclareMathOperator{\maybeA}{\mathtt{maybe_A}}
\DeclareMathOperator{\elimmaybeA}{\mathtt{elim}_{\maybeA}}
\DeclareMathOperator{\elim}{{\mathtt{elim}}}
\DeclareMathOperator{\typeK}{{\mathtt{type}}}
\DeclareMathOperator{\kindS}{{\mathtt{kind}}}
\newcommand{\shift}[1]{#1 \! \uparrow }
\newcommand{\shiftn}[2]{#1 \! \uparrow ^{ #2 } }
\newcommand{\subst}[2]{#1 [ #2 ] }
\newcommand{\substn}[3]{#1 [ #2 / #3 ] }
\newcommand{\shifteq}[2]{#1 \! \uparrow \; \equiv \; #2}
\newcommand{\shifteqI}[3]{#1 \! \uparrow^{#2} \; \equiv \; #3}
\newcommand{\substeq}[3]{#1 [ #2 ] \equiv #3 }
\newcommand{\substeqI}[4]{#1 [ #2 / #3 ] \equiv #4 }
\def\whr{\overset{\text{whr}}\longrightarrow}
\title[
    Proof-relevant Horn Clauses for Dependent Type Inference and Term Synthesis]
{
    Proof-relevant Horn Clauses for Dependent Type Inference and Term Synthesis}
\author[F. Farka, E. Komendantskaya, and K. Hammond]{
    Franti\v{s}ek Farka\\
    University of St Andrews, and Heriot-Watt University\\
    \email{ff12@st-andrews.ac.uk}
    \and Ekaterina Komendantskya\\
    Heriot-Watt University\\
    \email{ek19@hw.ac.uk}
    \and Kevin Hammond\\
    University of St Andrews\\
    \email{kevin@kevinhammond.net}
}
\newcommand{\ident}[1]{{\color{blue}\texttt{#1}}}
\newcommand{\identix}[2]{{\color{blue}\texttt{#1}_{\texttt{#2}}}}
\newcommand{\kv}[1]{\textbf{#1}}
\newcommand{\constr}[1]{{\color{green!50!black}\texttt{#1}}}
\begin{document}



\maketitle

\begin{abstract}
First-order resolution has been used for type inference for many years, including
in Hindley-Milner type inference, type-classes, and constrained data types.
Dependent types are a new trend in functional languages.
In this paper, we show that proof-relevant first-order resolution can play an important role in automating
type inference and term synthesis for dependently typed languages.
We propose a calculus that translates
 type inference and term synthesis problems in a dependently typed language
to a logic program and a goal in the proof-relevant first-order Horn clause logic.
The computed answer substitution and proof term then provide a solution to the given type
inference and term synthesis problem.
We prove the decidability and soundness of our method.
The paper is under consideration for acceptance in TPLP.
\end{abstract}

%

\keywords{
  Proof-relevant logic,
  Horn clauses,
  Dependent types,
  Type Inference,
  Proof-relevant resolution
  }

\section{Introduction}
\label{sect:intro}

First-order resolution is well known for supporting a range of automated reasoning
methods for type inference. Simple types have been a part of mainstream
languages since the 1960s; polymorphic types have been available in advanced
languages such as ML and Haskell since the 1980s; and type classes were
introduced from the 1990s onwards. Logic programming has had a role to play in each of these
stages.
Hindley and \citeN{Milner78} were the first to notice that type inference in
simply typed lambda calculus can be expressed as a first-order unification problem.
For example, the rule for term application in this calculus
\begin{prooftree}
    \AxiomC{$\Gamma \vdash M : A \rightarrow B$}
    \AxiomC{$\Gamma \vdash N : A$}
    \RightLabel{\textsc{App}}
    \BinaryInfC{$ 
        \Gamma \vdash M N : B
    $}
    \label{ir:app}
\end{prooftree}
gives rise to a type inference problem encoded by the following Horn clause:
\begin{equation*}
\type(\Gamma, \app(M, N), B) \leftarrow 
    \type(\Gamma, M, A \rightarrow B) \land \type(\Gamma,N,A)
\end{equation*}

\noindent
Given a term $E$, the query $\type(\Gamma,E,T)$  infers a type $T$ in a context $\Gamma$ such
that the typing judgement  $\Gamma \vdash E : T$ holds.
%
This general scheme allows a multitude of extensions. For example, in $HM(X)$
type inference \cite{OderskySW99}, constrained types must be accounted for.
For this
extension, a constraint logic programming $CLP(X)$ \cite{SulzmannS08}, was
suggested, in which constraint solving over a domain $X$ was added to the
existing first-order unification and resolution algorithms.
%
%
Haskell
type classes are
another example of the application of Horn clause resolution.  
Consider the following instances of the Haskell equality type
class:
%
\begin{lstlisting}
    instance Eq Int where ...
    instance(Eq x,Eq y) => Eq(x,y) where ...
\end{lstlisting}
%
They can be encoded by the following Horn clauses, annotated with 
names $\kappa_\mathtt{Int}$ and $\kappa_\mathtt{Pair}$:
\begin{equation*}
    \begin{aligned}
        \kappa_{\mathtt{Int}} && :~& eq(int)\\
        \kappa_{\mathtt{Pair}}   && :~&  eq(X,Y)  \leftarrow eq(X) \land eq(Y)
    \end{aligned}
\end{equation*}
Type class instance resolution is then implemented as first-order resolution
on Horn clauses. There is only one caveat---a \emph{dictionary} (that is, a proof term)
needs to be constructed \cite{PeytonJonesJM97}. 
For example, $Eq~ (Int, Int)$ is inferred to have a dictionary
$\kappa_\mathtt{Pair}(\kappa_\mathtt{Int}, \kappa_\mathtt{Int})$.
This 
records the resolution trace and
is treated as a witness of the type class  instance $Eq~(Int, Int)$.
Horn clause resolution is thus extended to \emph{proof-relevant resolution} \cite{FuKSP16}. This
line of work is on-going: various extensions to the syntax of type
classes are still being investigated \cite{KarachaliasS17}.
In recent years, the idea of relational type inference has been taken further by
miniKanren \cite{HemannFBM16}. This offers a range of relational domain specific languages
for ML, Rust, Haskell and many other languages. As
\citeN{AhnV16} point out, a relational language can be very convenient in
encodings of type inference problems.

In the last decade, 
dependent types \cite{WeirichVAE17,Brady13} have gained popularity in the programming language community.
They allow reasoning about program values
within the types, and thus give more general, powerful and flexible
mechanisms to enable verification of both the functional (correctness,
compliance, etc.) and the non-functional (execution time,
space, energy usage etc.) properties of code. 
Automation of type inference represents a big challenge for these
languages. Most dependently typed languages, such as Coq, Agda or Idris,
incorporate a range of algorithms that automate various aspects of type
inference (\cf \citeN{Pientka13}).
Some use reflection \cite{SlamaB17},
some are based on algorithms that are
similar to first-order resolution \cite{GonthierM10},
and others (\eg Liquid Haskell) incorporate
third-party SMT solvers \cite{VazouTCSNWJ18}. However, to the best of our knowledge, logic
programming has not yet made its definitive mark in this domain.

This paper fills this gap: we propose a first systematic
approach to logic-programming based type inference for a dependently-typed
language.
We demonstrate that Horn clause logic provides a convenient formal language
to express type inference problems while staying very close to the
formal specification of the dependently-typed language. Proof-relevant resolution
then
computes proof terms that capture well-formedness derivations 
of objects in the language.
We present a method to synthesise terms of the dependently typed language from such proof terms.
This  method can be applied in a more complex setting where a small kernel of a verified 
compiler off-loads proof-relevant resolution to an external, non-verified 
resolution engine
and then verifies synthesised derivations internally.

In the next section, we explain our main idea by means of an example.

\subsection*{Overview  of Results by Means of  an Example}
We rely on an abstract syntax that closely resembles existing functional
programming languages with dependent types. We will call it the \emph{surface language}.  
%
%
    Using the syntax we define $\identix{maybe}{A}$, an option type over a fixed type
    $\ident{A}$, indexed by a Boolean:
     \begin{equation*}
        \begin{aligned}
            \kv{data}~&\identix{maybe}{A}~(a : \ident{A}) && : \ident{bool} \to \typeK ~\kv{where}\\
            & \constr{nothing} && : \identix{maybe}{A}~\constr{ff}\\
            & \constr{just} && : \ident{A} \to \identix{maybe}{A}~\constr{tt} 
        \end{aligned}
    \end{equation*}
\noindent
Here, $\constr{nothing}$ and  $\constr{just}$ are the two \emph{constructors} of the $\ident{maybe}$ type.
The type is indexed by $\ident{ff}$ when the $\constr{nothing}$ constructor is used, and by $\ident{tt}$ when the $\constr{just}$ constructor is used
($\constr{ff}$ and $\constr{tt}$ are constructors of $\ident{bool}$).
A function $\ident{fromJust}$ extracts the
value from the $\constr{just}$ constructor:
    \begin{equation*}
        \begin{aligned}
            \ident{fromJust} & ~:~ \identix{maybe}{A}~\constr{tt} \to \ident{A}\\
            \ident{fromJust} & ~(\constr{just}~x) = x
        \end{aligned}
    \end{equation*}

    \noindent
    Note that the value $\constr{tt}$ appears within the type $\identix{maybe}{A}~\constr{tt} \to \ident{A}$ of this function (the type \emph{depends} on the value), allowing for
    a more precise function definition that omits the redundant case when the
    constructor of type $\identix{maybe}{A}$ is $\constr{nothing}$.
The challenge for the type checker is to determine that the missing case
$\ident{fromJust} ~\constr{nothing}$ 
is contradictory (rather than being omitted by mistake). Indeed, the type of $\constr{nothing}$
 is $\identix{maybe}{A}~\constr{ff}$. However, the function specifies its argument to be of type $\identix{maybe}{A}~\constr{tt}$.

To type check such functions, the compiler translates them
into  terms in a type-theoretic calculus. 
In this paper, we will rely on the calculus LF~\cite{HarperP05}, a standard and well-understood first-order dependent type theory.
We call this calculus the \emph{internal language} of the
compiler.
For our example, the signature of the  internal language is as follows:
%
\begin{multicols}{2}
    \begin{lstlisting}%[basicstyle=\footnotesize]

 A      : type
 bool   : type
 ff tt  : bool

 (==) : bool -> bool -> type
 refl   : Pi(vb:bool).vb == vb
 elim_==: tt == ff -> A

 maybe_A  : bool -> type
 nothing : maybe_A ff
 just    : A -> maybe_A tt
 elim_maybe_A: Pi(vb:bool).maybe_A vb
           -> (vb == ff -> A)
           -> (vb == tt -> A -> A) 
           -> A
\end{lstlisting}
\end{multicols}

\noindent 
We use $A \to B$ as an abbreviation for $\Pi (a : A) . B$ where $a$ does not
occur free in $B$. The final goal of type checking of the function
$\ident{fromJust}$ in the 
surface language 
is to obtain the following encoding 
in the internal language:

\begin{lstlisting}
    t_fromJust := \ ( vm:maybe_A tt).elim_maybe_A tt vm
                    (\ (vw:tt==ff).elim_== vw)
                    (\ (vw:tt==tt).\ (vx:A).vx)
\end{lstlisting}


%
%
\noindent
The missing case for 
$\constr{nothing}$ must be accounted for
(\cf the line  \lstinline{ (\ (vw:tt==ff).elim_== vw)} above).
In this example (as is generally the case), only partial information is given in the surface language.
To address this problem, we extend the internal language with term level metavariables, denoted by $?_a$, and
type level metavariables, denoted by $?_A$.  These stand for the parts of a term in the
internal language that are not yet known. Using metavariables, the term that directly
corresponds to $\ident{fromJust}$ is: 

\vskip-1em
\begin{minipage}[t]{4.7in}
\begin{lstlisting}
    t_fromJust := \ ( vm:maybe_A tt).elim_maybe_A Ma vm
                    (\ ( vw:  MA  ).  Mb  )
                    (\ ( vw:  MB  ).\ ( vx:A).vx)
\end{lstlisting}
\end{minipage}
\begin{minipage}[t]{1cm}
\begin{equation}
    ~   
    \label{ex:tjust}
    \tag{I}
\end{equation}
\end{minipage}

\noindent
The missing information comprises the two types  $?_A$ and $?_B$ and the term $?_b$ for the constructor
$\constr{nothing}$. 
Obtaining types $?_A$, $?_B$
amounts to type inference
(in the internal language, as opposed to checking in the surface language),
whereas obtaining the term $?_b$ amounts to
term synthesis.
In this paper, we are interested in automating such reasoning.
We use the notion \emph{refinement} to refer to the 
combined problem of type inference and term synthesis.
We make use of \emph{proof-relevant} Horn clause logic~\cite{FuK17} that was initially inspired by type class resolution, as described above.  In this logic, Horn clauses are seen as
types and proof witnesses --- as terms inhabiting the types.
Given a proposition---a \emph{goal}---and a set of Horn clauses---a
\emph{logic program}---the resolution process
is captured by an explicit proof term construction. 
We translate \emph{refinement problems} 
into the syntax of  logic programs.
The \emph{refinement} algorithm that we propose takes a signature and a term with metavariables in the extended internal language
to a logic program and a goal in proof-relevant Horn clause logic.
The unifiers that are computed by resolution give an assignment of types to type-level metavariables.
At the same time, the computed proof terms are interpreted
as an assignment of terms to term-level metavariables. 
We illustrate the process in the following paragraphs.

Consider the inference rule $\Pi$\textsc{-t-Elim} in LF, which generalises the inference rule \textsc{App} given on page 
\pageref{ir:app} of this paper:
\begin{equation*}
    \AxiomC{$ \Gamma \vdash M : \Pi x : A . B  $}
    \AxiomC{$ \Gamma \vdash N : A $}
    \RightLabel{\sc $\Pi$-t-Elim}
    \BinaryInfC{$
        \Gamma \vdash M N : B [ N / x]
    $}
    \DisplayProof
\end{equation*}
%
%
%
%
%
When type checking the term $t_{\mathtt{fromJust}}$ (defined in (I))
an application of
$\elimmaybeA~\ttb~m$ to the term
$\lambda (w : ?_A) . ?_b$ in the context 
$m : \maybeA~\ttb$
needs to be type checked. This amounts to providing 
a derivation of the typing judgement that contains the following instance of the
rule $\Pi$\textsc{-t-Elim}:
\begin{equation*}
    \AxiomC{$
        m : \maybeA~\ttb 
        \vdash
        \elimmaybeA~\ttb~m
    $}
    \noLine
    \UnaryInfC{$
        \quad \quad \quad : 
        (\ttb \eqb \ffb \to \A) \to \cdots \to  \A
    $}
    \AxiomC{$
        m : \maybeA~\ttb 
        \vdash
        \lambda (w : ?_A) . ?_b : ?_A \to ?_B
        $}
    \BinaryInfC{$
        m : \maybeA~\ttb 
        \vdash
        (\elimmaybeA~\ttb~m)~(\lambda (w : ?_A) . ?_b)
        : (\ttb \eqb \ttb \to \A \to \A) \to \A
    $}
    \DisplayProof
\end{equation*} 
%
For the above inference step to be a valid instance of the inference
rule {\sc $\Pi$-t-Elim}, it is necessary that
$(\ttb \eqb \ffb) =~ ?_A$
and $\A =~ ?_B$.
%
%
This is reflected in the goal: 
\begin{equation*}
    ((\ttb \eqb \ffb) =~ ?_A)
\land (\A =~ ?_B)
\land G_{ (\elimmaybeA~\ttb~m)}
\land
G_{\lambda (w : ?_A) . ?_b)}
\label{ex:goal}
\tag{II}
\end{equation*} 
The additional goals
$G_{ (\elimmaybeA~\ttb~m)}$ and
$G_{\lambda (w : ?_A) . ?_b}$ 
are recursively generated by the algorithm for the terms
$\elimmaybeA~\ttb~m$ and
$\lambda (w : ?_A) . ?_b$, respectively.
%
Similarly, assuming
the term $\lambda (w : ?_A) . ?_b$ is of type $(\ttb
\eqb \ffb) \to \A$, type checking places restrictions on the term $?_b$:
\begin{equation*}
    \AxiomC{$
        m : \maybeA~\ttb 
        \vdash
        \ttb \eqb \ffb : \typeK
    $}
    \AxiomC{$
        m : \maybeA~\ttb , w : \ttb
        \eqb \ffb
        \vdash
        ?_b : \A
        $}
    \BinaryInfC{$
        m : \maybeA~\ttb 
        \vdash
        \lambda (w : \ttb \eqb \ffb) .
            ?_b : \ttb \eqb \ffb \to \A
    $}
    \DisplayProof
\end{equation*}

\noindent
%
That is, $?_b$ needs to be a well-typed term of type $\A$ in a context
consisting of $m$ and $w$.
Recall that in the signature there is a constant $\mathtt{elim}_{\equiv_{\bool}}$ of type
$\ttb \equiv_{\bool} \ffb \to \A$. Our translation will turn this constant into 
a clause in the generated logic program. There will be a clause that
corresponds to the inference rule for elimination of a $\Pi$ type as well:
\begin{equation*}
    \begin{split}
        \kappa_{\elimeqb} & :~ 
        \term(\elimeqb, \Pi x : \ttb \eqb \ffb . \A , ?_\Gamma) \leftarrow \\
        \kappa_{\elim} & :~ \term(?_M ?_N, ?_B, ?_\Gamma) \leftarrow 
        \term(?_M, \Pi x : ?_A . ?_{B'}, ?_\Gamma) 
        \land
        \term(?_N, ?_A, ?_\Gamma)
        \land
        \substeq{?_{B'}}{?_N / x}{?_B}
    \end{split}
\end{equation*}
\noindent
The above clauses are written  in the  proof-relevant Horn clause logic,
and thus $\kappa_{\elimeqb}$ and $\kappa_{\elim}$ now play the role of 
proof-term symbols (``witnesses'' for the clauses).
%
In this clause, $?_M$, $?_N$, $?_A$, $?_B$, $?_{B'}$ and $?_{\Gamma}$ are \emph{logic variables}, i.e. variables of the first-order logic.
By an abuse of notation, we  use the same symbols for metavariables of the internal language and
logic variables in the logic programs generated by the refinement algorithm.
We also use the same notation for objects of the internal language and terms of
the logic programs. This is possible since
we represent variables using \emph{de Bruijn indices}.

%
The presence of $w : \ttb \eqb \ffb$
in the context allows us to use the clause $\elimeqb$ to resolve the goal
$\term(?_M ?_N, \A, [m : \maybeA~\ttb , w : \ttb
\eqb \ffb])$:
%
\begin{equation}
    \begin{split}
        \termSPLIT (?_M ?_N ,  \A , [m : {\maybeA}, w : \ttb
        \eqb \ffb])
        \rightsquigarrow_{\kappa_{\elim}} \\
        & \term(?_M, \Pi x : ?_A . \A,[\dots])
        \land
        \term(?_N, ?_A, [\dots, w : \ttb \eqb \ffb])
        \land
        \substeq{\A}{?_N / x}{?_B}
        \rightsquigarrow_{\kappa_{\elimeqb}} \\
        & \term(?_N, \ttb \eqb \ffb , [\dots, w : \ttb \eqb \ffb])
        \land
        \substeq{\A}{?_N / x}{?_B}
        \rightsquigarrow_{\kappa_{\texttt{proj}_w}} \\
        & \substeq{\A}{?_N / x}{?_B}
        \rightsquigarrow_{\kappa_{\texttt{subst}_{\A}}}
        \bot
    \end{split}
    \tag{III}
        \label{ex:deriv}
\end{equation}

\noindent
The resolution steps are denoted by $\rightsquigarrow$. Each step is indexed by the name
of the clause that was used. First, the goal is resolved in one step using the
clause $\kappa_{\elim}$. A clause $\kappa_{\texttt{proj}_w}$  is used to project
the variable $w$ from the context. We postpone further discussion of the exact shape of the
clauses until Section~\ref{sec:refin}, since it depends on the representation we
use.
For the moment, we are just interested in composing the proof terms
occurring in these resolution steps into one composite proof term: 
$\kappa_{\elim} ~ \kappa_{\elimeqb} ~ \kappa_{\texttt{proj}_w} 
\kappa_{\texttt{subst}_{\A}}$.
%
%
Note that, by resolving the goal (\ref{ex:goal}), we obtain a  substitution $\theta$ that assigns the type $A$ to
the logic variable $?_B$, \ie $\theta(?_B) = A$.
%
At the same time, the proof term computed by the the derivation (\ref{ex:deriv})  is interpreted as a solution
$(\elimeqb ~ w)$ for the term-level
metavariable
$?_b$. However, 
the proof term can be used to reconstruct the derivation of well-typedness
of the judgement 
$m : \maybeA~\ttb , w : \ttb \eqb \ffb \vdash \elimeqb ~ w : \A$
as well.
In general,
a substitution is interpreted as a solution to a type-level metavariable and
a proof term as a solution to a term-level metavariable.
%
The remaining solution for $?_A$ is computed using similar methodology, and we omit the details here.
Thus, we have computed values for all metavariables in function (\ref{ex:tjust}),
\ie we inferred all types and synthesised all terms.

%
%

%

\subsection*{Contributions} 
Our main contributions are:
\begin{enumerate}
\item
  We present a novel
  approach to refinement
  for a first-order type
  theory with dependent types that is simpler than existing methods, \eg \citeN{PientkaD10}.

  \item We prove that generation of goals and logic programs from the extended language is decidable.

\item  We show that
  proof-relevant first-order Horn clause resolution 
  gives an appropriate inference mechanism for dependently typed languages: firstly, it is sound with respect to type checking in LF; secondly,
  the proof term construction alongside the resolution trace allows to reconstruct the derivation of well-typedness judgement.  

\end{enumerate}

\noindent
This paper is structured as follows.
Section~\ref{sec:nlfodtt} gives a nameless formulation of LF, the chosen first-order dependent
type theory. 
We then present the described refinement algorithm by means of a formal calculus in Section~\ref{sec:refin} and show that it is decidable.
Section~\ref{sec:sound}
establishes interpretation of answer substitutions and proof terms as solutions
to refinement problems and states soundness of the interpretation.
Finally, in Sections~\ref{sec:relwork} and \ref{sec:conclusion}
we discuss related and future work and conclude.
%
%


%


\section{Nameless LF}
\label{sec:nlfodtt}

Standard expositions of a type theory use variable names. However, variable
names carry a burden when implementing such a type theory. For
example, types need to be checked up to $\alpha$-equivalence of bound variables 
and fresh
names need to be introduced in order to expand terms to $\eta$-long form.
In order to avoid the burden, existing implementations use de Bruijn
indices. We use de Bruijn indices directly in our exposition as it allows us to
avoid the above problems when checking the equality of terms and types and when
synthesising new terms and types.

We use natural numbers in $\mathbb{N}$ for de Bruijn indices $\iota, \iota_1, \dots$, and we denote successor 
by $\sigma(-)$. 
We assume countably infinite disjoint sets $\mathcal{C}$ of
\emph{term constants}, and $\mathcal{B}$ of \emph{type constants}. 
We denote elements of $\mathcal{C}$ by  $c$, $c'$, \etc, and elements of
$\mathcal{B}$ by $\alpha$, $\beta$, \etc

\begin{definition}[Nameless LF]
    \label{def:nlf}
    The \emph{terms}, \emph{types}, and \emph{kinds} as
    well as \emph{signatures} and \emph{contexts} are: 

    \parbox{2.2in}{
    \begin{gather*}
    \begin{aligned}[c]
        \small
        \text{Kinds} &&  ~K & ::=~ && \text{type} 
            \mid \Pi T . K
            \\
        \text{Types} &&  T & ::= && \mathcal{B} \mid T t
            \mid \Pi T . T\\
        \text{Terms} &&  t & ::= && \mathcal{C} 
            \mid \mathbb{N}
            \mid \lambda T . t \mid t t
    \end{aligned}
    \end{gather*}
    }
    \parbox{2.5in}{
    \begin{gather*}
    \begin{aligned}[c]
        \text{Signatures} &&  ~Sig & ::=~ && \cdot
            \mid Sig , \mathcal{C} : T \mid
            Sig , \mathcal{B} : K
            \\
        \text{Contexts} && ~Con & ::= && \cdot
            \mid Con , T
            \\~\\
    \end{aligned}
    \end{gather*}
    }
\end{definition}

\noindent
We use identifier $L$ to denote kinds in $K$, 
identifiers $A$, $B$ to denote types in $T$ and identifiers $M$, $N$ to
denote terms in $t$. We use $\mathcal{S}$ for signatures and $\Gamma$ for
contexts.
In line with standard practice, we define two operations.
\emph{Shifting} recursively
traverses a term, a type, or a kind and increases all indices by one.

\begin{definition}[Shifting]
    Term and type shifting, denoted by $\shiftn{(-)}{\iota}$ is defined as
    follows:

\noindent
\parbox{0.33\textwidth}{
    \begin{gather*}
        \begin{aligned}[c]
            \shiftn{c}{\iota} & = c\\
            \shiftn{(\lambda A . M)}{\iota} & =
                \lambda \shiftn{A}{\iota} . \shiftn{M}{\sigma \iota}\\
            \shiftn{(M N)}{\iota} & = 
                \shiftn{M}{\iota} . \shiftn{N}{\iota}\\
        \end{aligned}
    \end{gather*}
}
\parbox{0.33\textwidth}{
    \begin{gather*}
        \begin{aligned}[c]
            \shiftn{\iota}{0} & = \sigma \iota\\
            \shiftn{0}{\sigma \iota} & = 0\\
            \shiftn{\sigma \iota}{\sigma \iota'} & =
                \sigma (\shiftn{\iota}{\iota'})
        \end{aligned}
    \end{gather*}
}
\parbox{0.33\textwidth}{
    \begin{gather*}
        \begin{aligned}[c]
            \shiftn{\alpha}{\iota} & = \alpha\\
            \shiftn{(\Pi A . B)}{\iota} & =
                \lambda \shiftn{A}{\iota} . \shiftn{B}{\sigma \iota}\\
            \shiftn{(A M)}{\iota} & = 
                \shiftn{A}{\iota} . \shiftn{M}{\iota}\\
        \end{aligned}
    \end{gather*}
}
\end{definition}

\noindent
\emph{Substitution} with a term $N$ and index $\iota$
replaces 
indices that are bound by the $\iota$-th binder while updating remaining indices.
The index $\iota$ is increased when traversing under a binder.

\begin{definition}[Substitution]
    Term and type substitution, denoted by $\substn{(-)}{N}{\iota}$ is defined as
    follows:

\noindent
\parbox{0.385\textwidth}{
    \begin{gather*}
        \begin{aligned}[c]
            \substn{c}{N}{\iota} & = c\\
            \substn{(\lambda A . M)}{N}{\iota} & =
                \lambda \substn{A}{N}{\iota} . 
                    \substn{M}{\shiftn{N}{0}}{\sigma \iota}\\
            \substn{(M_1 M_2)}{N}{\iota} & = 
                \substn{M_1}{N}{\iota} . \substn{M_2}{N}{\iota}\\
        \end{aligned}
    \end{gather*}
}
\parbox{0.22\textwidth}{
    \begin{gather*}
        \begin{aligned}[c]
            \substn{0}{N}{0} & = N\\
            \substn{0}{N}{\sigma \iota} & = 0\\
            \substn{\sigma \iota}{N}{0} & = \sigma \iota\\
            \substn{\sigma \iota}{N}{\sigma \iota'} & =
                \sigma (\substn{\iota}{N}{\iota'})
        \end{aligned}
    \end{gather*}
}
\parbox{0.2\textwidth}{
    \begin{gather*}
        \begin{aligned}[c]
            \substn{\alpha}{N}{\iota} & = \alpha\\
            \substn{(\Pi A . B)}{N}{\iota} & =
                \lambda \substn{A}{N}{\iota} . 
                    \substn{\shiftn{B}{0}}{N}{\sigma \iota}\\
            \substn{(A M)}{N}{\iota} & = 
                \substn{A}{N}{\iota} . \substn{M}{N}{\iota}\\
        \end{aligned}
    \end{gather*}
}
\end{definition}

\noindent
Shifting with a greater index than $0$ and 
substitution for other indices than $0$ is not necessary
in the inference rules of neither the internal language nor refinement.
For the sake of readability we introduce the following abbreviations:

\noindent
\parbox{0.25\textwidth}{
\begin{gather*}
    \shift{A} \overset{def}{=} \shiftn{A}{0}\\
\end{gather*}
}
\parbox{0.25\textwidth}{
\begin{gather*}
    \shift{M} \overset{def}{=} \shiftn{M}{0}\\
\end{gather*}
}
\parbox{0.25\textwidth}{
\begin{gather*}
    \subst{A}{N} \overset{def}{=} \substn{A}{ N }{ 0}\\
\end{gather*}
}
\parbox{0.25\textwidth}{
\begin{gather*}
    \subst{M}{N} \overset{def}{=} \substn{M}{ N }{ 0}\\
\end{gather*}
}

\vskip-1.5em
\noindent
Well-formedness of objects introduced by
Definition~\ref{def:nlf} is stated by means of several judgements.
In particular, we give equality in nameless LF as algorithmic, following Harper
and Pfenning~\cite{HarperP05}. 
In order to do so we define 
 simple kinds, simple types, simple signatures, and simple contexts:

\begin{definition}
    The \emph{simple kinds}, \emph{simple types}, and 
     \emph{simple signatures} are:
    \begin{gather*}
    \begin{aligned}[c]
        \text{Simple kinds} &&  ~K^- & ::=~ && \typeK^- 
            \mid T^- \rightarrow K^-
            \\
        \text{Simple types} &&  T^- & ::= && \mathcal{B} 
            \mid T^- \rightarrow T^- \\
        \text{Simple signatures} &&  ~Sig^- & ::=~ && \cdot
            \mid Sig^- , \mathcal{C} : T^-
            \mid Sig^- , \mathcal{B} : K^-
            \\
        \text{Simple contexts} && ~Con^- & ::= && \cdot
            \mid Con^- , T^-
            \\
    \end{aligned}
    \end{gather*}
\end{definition}
We use identifiers $\kappa$ for simple kinds, $\tau$ for simple types, 
$\sig^-$ for simple signatures and $\Delta$ for simple contexts. The
erasure from objects to corresponding simple objects, denoted $(-)^-$
is defined as follows:

\begin{definition}[Erasure]

\noindent
\parbox{0.5\textwidth}{
\begin{gather*}
    \begin{aligned}[c]
        (\typeK)^- &= \typeK\\
        (\Pi A . L)^- &= (A)^- \to (L)^-
    \end{aligned}
\end{gather*}
}
\parbox{0.5\textwidth}{
\begin{gather*}
    \begin{aligned}[c]
        (\alpha)^- &= \alpha\\
        (\Pi A . B)^- &= (A)^- \to (B)^-\\
        (A M)^- &= (A)^- 
    \end{aligned}
\end{gather*}
}
\end{definition}

The well-formedness of judgements for kinds, types and terms, 
weak algorithmic equality of types, algorithmic and structural equality of
terms, and weak head reduction of terms are:

\noindent
\parbox{0.25\textwidth}{
    \begin{gather*}
        \mathcal{S} ; \Gamma \vdash L : \kindS\\
        \mathcal{S}^- ; \Delta \vdash A \rightleftharpoons A' : \kappa
    \end{gather*}
}
\parbox{0.25\textwidth}{
    \begin{gather*}
        \mathcal{S} ; \Gamma \vdash A : L\\
        \mathcal{S} ; \Gamma \vdash M \leftrightarrow M' : \tau
    \end{gather*}
}
\parbox{0.25\textwidth}{
    \begin{gather*}
        \mathcal{S} ; \Gamma \vdash M : A\\
        \mathcal{S} ; \Gamma \vdash M \Leftrightarrow M' : \tau
    \end{gather*}
}
\parbox{0.25\textwidth}{
    \begin{gather*}
        M \whr M'
    \end{gather*}
}

\begin{figure*}
    \framebox{$
        \sig   ;  \Gamma  \vdash  L  :  \kindS
    $}\raggedright
    \vskip1em
    \noindent
    \parbox{1.6in}
    {
        \centering
        \AxiomC{$
            \sig  \vdash \Gamma \text{ ctx} 
        $}
        \RightLabel{\sc K-ty}
        \UnaryInfC{$
            \sig ; \Gamma \vdash \typeK : \kindS
        $}
        \DisplayProof
    }
%
    \parbox{3.6in}
    {
        \centering
        \AxiomC{$
            \sig ; \Gamma \vdash A : \typeK
        $}
        \AxiomC{$
            \sig ; \Gamma, A \vdash L : \kindS
        $}
        \RightLabel{\sc K-$\Pi$-intro}
        \BinaryInfC{$
            \sig ; \Gamma \vdash \Pi A . L : \kindS
        $}
        \DisplayProof
    }

    \caption{Well-formedness of nameless kinds}
    \label{fig:wfnlkinds}
\end{figure*}

\begin{figure*}

    \framebox{$
        \mathcal{S}  ;  \Gamma  \vdash A : L
    $}\raggedright
    \vskip1em

    \parbox{1.9in}
    {
        \centering
        \AxiomC{$
            \sig \vdash \Gamma \text{ ctx}
        $}
        \AxiomC{$
            \alpha : L \in \mathcal{S}
        $}
        \RightLabel{\sc T-con}
        \BinaryInfC{$
            \sig ; \Gamma \vdash  \alpha : L
        $}
        \DisplayProof
    }
    \parbox{3.3in}
    {
        \centering
        \AxiomC{$
            \sig ; \Gamma \vdash A : \typeK
        $}
        \AxiomC{$
            \sig ; \Gamma, A \vdash B : \typeK
        $}
        \RightLabel{\sc T-$\Pi$-intro}
        \BinaryInfC{$
            \sig ; \Gamma \vdash \Pi A . B : \typeK
        $}
        \DisplayProof
    }
    \vskip1em
    \parbox{5.2in}
    {
        \centering
        \AxiomC{$
            \sig ; \Gamma \vdash A : \Pi B . L
        $}
        \AxiomC{$
            \sig ; \Gamma \vdash M : B'
        $}
        \AxiomC{$
            \sig^- ; \Gamma^- \vdash B \rightleftharpoons B' : \typeK^-
        $}
        \RightLabel{\sc T-$\Pi$-elim}
        \TrinaryInfC{$
            \sig ; \Gamma \vdash A M : \subst{L}{M}
        $}
        \DisplayProof
    }

    \caption{Well-formedness of nameless types}
    \label{fig:wfnltypes}
\end{figure*}

\begin{figure*}[t]

    \framebox{$
        \mathcal{S}  ;  \Gamma  \vdash M : A
    $}\raggedright
    \vskip1em

    \parbox{2.0in}
    {
        \centering
        \AxiomC{$
            \sig \vdash \Gamma \text{ ctx}
        $}
        \AxiomC{$
            c : A \in \sig \
        $}
        \RightLabel{\sc con}
        \BinaryInfC{$
            \sig ; \Gamma \vdash c : A
        $}
        \DisplayProof
    }
    \parbox{1.7in}
    {
        \centering
        \AxiomC{$
            \sig \vdash \Gamma,A \text{ ctx}
        $}
        \RightLabel{\sc zero}
        \UnaryInfC{$
            \sig ; \Gamma, A \vdash 0 : \shift{A} 
        $}
        \DisplayProof
    }
    \parbox{1.5in}
    {
        \centering
        \AxiomC{$
            \sig ; \Gamma \vdash \iota : A
        $}
        \RightLabel{\sc succ}
        \UnaryInfC{$
            \sig ; \Gamma, B \vdash \sigma \iota  : \shift{A} 
        $}
        \DisplayProof
    }
    \vskip1em
    \parbox{4.8in}
    {
        \centering
        \AxiomC{$
            \sig ; \Gamma \vdash A : \typeK
        $}
        \AxiomC{$
            \sig ; \Gamma,A \vdash M : B
        $}
        \RightLabel{\sc $\Pi$-intro}
        \BinaryInfC{$
            \sig ; \Gamma \vdash \lambda A . M : \Pi A . B
        $}
        \DisplayProof
    }
    \vskip1em
    \parbox{4.8in}
    {
        \centering
        \AxiomC{$
            \sig ; \Gamma \vdash M : \Pi A . B
        $}
        \AxiomC{$
            \sig ; \Gamma \vdash N : A'
        $}
        \AxiomC{$
            \sig^- ; \Gamma^- \vdash A \rightleftharpoons A' : \typeK
        $}
        \RightLabel{\sc $\Pi$-elim}
        \TrinaryInfC{$
            \sig ; \Gamma \vdash M N : \subst{B}{N}
        $}
        \DisplayProof
    }

    \caption{Well-formedness of nameless terms}
    \label{fig:wfnlterms}
\end{figure*}

\noindent
The inference rules for well-formedness of kinds, types, and terms
are listed in Figures~\ref{fig:wfnlkinds},
\ref{fig:wfnltypes}, and~\ref{fig:wfnlterms}. The inference rules of
well-formedness judgements for signatures and contexts as well as
for definitional equality are standard (\cf \citeN{HarperP05}).
The inference rules for weak head reduction are listed in Figure \ref{fig:whr}.
The well-formedness of signatures and contexts is defined in Figure 
\ref{fig:wfsigcon}. Algorithmic equality of terms, structural equality of terms
and weak algorithmic equality of types are defined in Figures
\ref{fig:eqtea}, \ref{fig:eqtes}, and \ref{fig:eqty} respectively.

\begin{figure}
    \framebox{$
        M \whr M'
    $}\raggedright
    \vskip1em
    \noindent
    \parbox{0.4\textwidth}
    {
        \centering
        \AxiomC{}
        \UnaryInfC{$
            (\lambda A . M) N \whr \subst{M}{N}
        $}
        \DisplayProof
    }
    \parbox{0.59\textwidth}
    {
        \centering
        \AxiomC{$
            M \whr M'
        $}
        \UnaryInfC{$
            M N \whr M' N'
        $}
        \DisplayProof
    }

    \caption{Weak head reduction of terms}
    \label{fig:whr}
\end{figure}

\begin{figure}
    \framebox{$
        \vdash \sig \text{ sig}
    $}\raggedright
    \vskip1em
    \noindent
    \parbox{0.10\textwidth}
    {
        \centering
        \AxiomC{}
        \UnaryInfC{$
            \vdash
            \cdot \text{ sig}
        $}
        \DisplayProof
    }
    \parbox{0.43\textwidth}
    {
        \centering
        \AxiomC{$
            \vdash \sig \text{ sig}
        $}
        \AxiomC{$
            \sig ; \cdot 
            \vdash L : \kindS
        $}
        \AxiomC{$
            a \not\in \sig
        $}
        \TrinaryInfC{$
            \vdash \sig, a : L \text{ sig}
        $}
        \DisplayProof
    }
    \parbox{0.42\textwidth}
    {
        \centering
        \AxiomC{$
            \vdash \sig \text{ sig}
        $}
        \AxiomC{$
            \sig ; \cdot 
            \vdash A : \typeK
        $}
        \AxiomC{$
            c \not\in \sig
        $}
        \TrinaryInfC{$
            \vdash \sig, c : A \text{ sig}
        $}
        \DisplayProof
    }

    \vskip1em
    \framebox{$
        \sig \vdash \Gamma \text{ ctx}
    $}\raggedright
    \vskip1em
    \noindent
    \parbox{0.4\textwidth}
    {
        \centering
        \AxiomC{$
        \vdash \sig \text{ sig}
        $}
        \UnaryInfC{$
            \sig \vdash
            \cdot \text{ ctx}
        $}
        \DisplayProof
    }
    \parbox{0.59\textwidth}
    {
        \centering
        \AxiomC{$
            \sig \vdash \Gamma \text{ ctx}
        $}
        \AxiomC{$
            \sig  ; \Gamma
            \vdash A : \typeK
        $}
        \BinaryInfC{$
            \vdash \sig; \Gamma , A \text{ ctx}
        $}
        \DisplayProof
    }

    \caption{Well-formedness of signatures and contexts}
    \label{fig:wfsigcon}
\end{figure}

\begin{figure}
    \framebox{$
        \sig^- ; \Delta \vdash M \Leftrightarrow M' : \tau
    $}\raggedright
    \vskip1em

    \noindent
    \parbox{0.49\textwidth}
    {
        \centering
        \AxiomC{$
            M \whr M'
        $}
        \AxiomC{$
            \sig^- ; \Delta 
            \vdash
            M' \Leftrightarrow N
        $}
        \BinaryInfC{$
            \sig^- ; \Delta 
            \vdash
            M \Leftrightarrow N
        $}
        \DisplayProof
    }
    \parbox{0.49\textwidth}
    {
        \AxiomC{$
            N \whr N'
        $}
        \AxiomC{$
            \sig^- ; \Delta 
            \vdash
            M \Leftrightarrow N'
        $}
        \BinaryInfC{$
            \sig^- ; \Delta 
            \vdash
            M \Leftrightarrow N
        $}
        \DisplayProof
    }
    \vskip1em

    \noindent
    \parbox{0.49\textwidth}
    {
        \AxiomC{$
            \sig^- ; \Delta 
            \vdash
            M \leftrightarrow N : \tau
        $}
        \UnaryInfC{$
            \sig^- ; \Delta 
            \vdash
            M \Leftrightarrow N : \tau
        $}
        \DisplayProof
    }
    \parbox{0.49\textwidth}
    {
        \centering
        \AxiomC{$
            \sig^- ; \Delta, \tau_1
            \vdash 
            (\shift{M})0 \Leftrightarrow (\shift{N})0 : \tau_2
        $}
        \UnaryInfC{$
            \sig^- ; \Delta
            \vdash 
            M \Leftrightarrow N : \tau_1 \to \tau_2
        $}
        \DisplayProof
    }

    \caption{Algorithmic equality of terms}
    \label{fig:eqtea}
\end{figure}

\begin{figure}
    \vskip1em
    \framebox{$
        \sig^- ; \Delta \vdash M \leftrightarrow N : \tau
    $}\raggedright
    \vskip1em
    \noindent
    \parbox{0.33\textwidth}
    {
        \centering
        \AxiomC{$
        \vdash \sig^- \text{ ssig}
        $}
        \UnaryInfC{$
            \sig^- ; \Delta , \tau
            \vdash
            0 \leftrightarrow 0 : \tau
        $}
        \DisplayProof
    }
    \parbox{0.33\textwidth}
    {
        \centering
        \AxiomC{$
            \sig^- ; \Delta \vdash \iota \leftrightarrow \iota' : \tau
        $}
        \UnaryInfC{$
            \sig^- ; \Delta, \tau' 
            \vdash 
            \sigma \iota \leftrightarrow \sigma \iota' : \tau
        $}
        \DisplayProof
    }
    \parbox{0.32\textwidth}
    {
        \centering
        \AxiomC{$
        \vdash \sig^- \text{ ssig}
        $}
        \AxiomC{$
        c : \tau \in \sig^-
        $}
        \BinaryInfC{$
            \sig^- ; \Delta 
            \vdash
            c \leftrightarrow c : \tau
        $}
        \DisplayProof
    }

    \vskip1em
    \parbox{\textwidth}
    {
        \centering
        \AxiomC{$
            \sig^- ; \Delta
            \vdash 
            M_1 \leftrightarrow N_1 : \tau_2 \to \tau_1
        $}
        \AxiomC{$
            \sig^- ; \Delta
            \vdash 
            M_2 \Leftrightarrow N_2 : \tau_2
        $}
        \BinaryInfC{$
            \sig^- ; \Delta 
            \vdash
            M_1 M_2 \leftrightarrow N_1 N_2 : \tau_1
        $}
        \DisplayProof
    }
    \caption{Structural equality of terms}
    \label{fig:eqtes}
\end{figure}

\begin{figure}
    \framebox{$
        \sig^- ; \Delta \vdash A \rightleftharpoons A' : \kappa
    $}\raggedright
    \vskip1em

    \noindent
    \parbox{0.33\textwidth}
    {
        \centering
        \AxiomC{$
            \vdash \sig^- \text{ ssig}
        $}
        \AxiomC{$
            \alpha : \kappa \in \sig^- 
        $}
        \BinaryInfC{$
            \sig^- ; \Delta
            \vdash 
            \alpha \rightleftharpoons \alpha : \kappa
        $}
        \DisplayProof
    }
    \parbox{0.63\textwidth}
    {
        \centering
        \AxiomC{$
            \sig^- ; \Delta
            \vdash 
            A \rightleftharpoons B : \tau \to \kappa
        $}
        \AxiomC{$
            \sig^- ; \Delta
            \vdash 
            M \Leftrightarrow N : \tau
        $}
        \BinaryInfC{$
            \sig^- ; \Delta 
            \vdash
            A M \rightleftharpoons B N : \kappa
        $}
        \DisplayProof
    }

    \vskip1em
    \parbox{\textwidth}
    {
        \centering
        \AxiomC{$
            \sig^- ; \Delta
            \vdash
            A_1 \rightleftharpoons B_1 : \typeK
        $}
        \AxiomC{$
            \sig^- ; \Delta , (A_1)^-
            \vdash
            (\shift{A_2}) {}\rightleftharpoons (\shift{B_2}) : \typeK
        $}
        \BinaryInfC{$
            \sig^- ; \Delta 
            \vdash
            (\Pi A_1 . A_2) \rightleftharpoons (\Pi B_1 . B_2) : \typeK
        $}
        \DisplayProof
    }
    \caption{Weak algorithmic equality of types}
    \label{fig:eqty}
\end{figure}

\section{Refinement in Nameless LF}
\label{sec:refin}

Following the ideas we sketched in Section~\ref{sect:intro}, we present
the translation of a refinement problem into Horn clause logic with explicit
proof terms.
Firstly, we extend the language of nameless LF with metavariables, which allows us
to capture incomplete terms. Secondly, we describe the language of
Horn clause logic with explicit proof terms. Finally, we give a calculus for
transformation of an incomplete term to a goal and a program. 

\subsection{Refinement Problem}

We capture missing information in nameless LF terms by metavariables. 
We assume infinitely countable disjoint sets  $?_\mathcal{B}$ and
$?_\mathcal{V}$ that stand for omitted types and terms and we call elements of
these sets type-level and term-level metavariables respectively.
We use identifiers $?_a$, $?_b$, \etc to denote elements of $?_\mathcal{V}$ and
identifiers $?_A$, $?_B$, \etc to denote elements of
$?_\mathcal{B}$. The extended syntax is defined as follows:
\begin{definition}[Extended Nameless LF]
    We define \emph{extended nameless types, terms and contexts} as follows:

\noindent
\parbox{2.5in}{
    \begin{gather*}
    \begin{aligned}[c]
        \text{Types} &&  T & ::= && \dots \mid{} ?_\mathcal{B} \\
        \text{Terms} &&  t & ::= && \dots \mid{} ?_\mathcal{V} \\
    \end{aligned}
    \end{gather*}
}
\parbox{2.5in}{
    \begin{gather*}
    \begin{aligned}[c]
        \text{Contexts} && ~Con & ::= && \dots \mid Con , ?_\mathcal{V} : T
    \end{aligned}
    \end{gather*}
}
\end{definition}
%
The ellipsis in the definition are to be understood as the appropriate syntactic
constructs of Definition \ref{def:nlf}. Note that we do not define an extended
signature. We assume that the signature is always fixed and does not contain any
metavariables. This does not pose any problem since well-typedness of signature
does not depend on the term being refined.
We use $\mtvar(-)$ and $\mvar(-)$ to denote the sets of type-level 
and term-level metavariables respectively. 
The well-formedness judgements of the
nameless LF are then defined on a subset of extended 
objects.
%

\begin{lemma}
  \label{lem:ground}
    Let $L$ be an extended nameless kind, $A$ an extended nameless type and $M$
    an extended nameless term. 
    Let $\sig$ and $\Gamma$ be contexts.
    \begin{itemize}
        \item If $\sig ; \Gamma \vdash L : \kindS$ then $\mvar(L) = \emptyset$ and
            $\mtvar(L) = \emptyset$,
        \item if $\sig ; \Gamma \vdash A : L$ then $\mvar(A) = \emptyset$ and
            $\mtvar(A) = \emptyset$, and
        \item if $\sig ; \Gamma \vdash M : A$ then $\mvar(M) = \emptyset$ and
            $\mtvar(M) = \emptyset$.
    \end{itemize}
  \end{lemma}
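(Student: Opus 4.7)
The plan is to prove the three statements by simultaneous induction on the structure of the well-formedness derivations. The crucial observation is syntactic: inspecting the inference rules in Figures~\ref{fig:wfnlkinds}, \ref{fig:wfnltypes} and~\ref{fig:wfnlterms}, no rule has a conclusion of the form $\sig;\Gamma \vdash {?_A} : L$ or $\sig;\Gamma \vdash {?_a} : A$, and no rule introduces a metavariable on the right-hand side either. Metavariables are purely an artefact of the extended syntax and sit outside the reach of these judgements. So any well-formedness derivation must be built using only the original constructs of Definition~\ref{def:nlf}, and the absence of metavariables will fall out case-by-case.

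Concretely, I would fix a derivation and do case analysis on its last rule. For kinds, \textsc{K-ty} gives $\typeK$ (no metavariables), and \textsc{K-$\Pi$-intro} gives $\Pi A . L$ where the inductive hypothesis applied to the two premises yields $\mvar(A)=\mtvar(A)=\emptyset$ and $\mvar(L)=\mtvar(L)=\emptyset$, hence the same for $\Pi A . L$. For types, \textsc{T-con} gives a constant $\alpha \in \mathcal{B}$, which by definition lies outside $?_\mathcal{B}$; \textsc{T-$\Pi$-intro} and \textsc{T-$\Pi$-elim} follow from the IH on their premises (note that \textsc{T-$\Pi$-elim} relies on the mutual statement for terms to rule out metavariables in the applied argument $M$). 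For terms, \textsc{con}, \textsc{zero} and \textsc{succ} produce $c$, $0$ or $\sigma\iota$, none of which is in $?_\mathcal{V}$; \textsc{$\Pi$-intro} and \textsc{$\Pi$-elim} again close under the IH, where the type-annotation in the $\lambda$-binder and the type in the codomain position require the mutual statement on types.

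The only subtlety, and the one point that warrants care, is the interleaving of types and terms: a type may contain a term (via application $A\,M$) and a term may contain a type (via $\lambda A . M$). This is exactly why the three statements must be proved simultaneously rather than one after the other; otherwise an inductive call on, say, the term premise of \textsc{$\Pi$-intro} would not be available when establishing the type statement. The context-formation rule in Figure~\ref{fig:wfsigcon} pushes types into $\Gamma$, but since the statement only talks about $L$, $A$, $M$ and not about $\Gamma$, this does not enter the induction directly; the analogous fact for well-formed contexts is an easy corollary if ever needed later. There is no real obstacle beyond book-keeping the mutual induction, and the proof reduces to a routine case analysis.
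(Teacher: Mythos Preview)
Your proposal is correct and matches the paper's approach: the paper's proof is the one-line ``By induction on the derivation of judgements,'' and your write-up simply spells out the simultaneous induction and case analysis that this entails. The observation that the mutual dependence of types and terms forces a simultaneous induction is exactly the right point to flag.
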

\begin{proof}
    By induction on the derivation of judgements.
\end{proof}

\noindent
A \emph{refinement problem} is defined as a term in the extended syntax.
A signature and a context of the term are kept implicit.

\begin{example}[Refinement Problem]\label{ex:rp}
Taking the problem from Introduction, the term $M'$ given by
$ (\elimmaybeA~\ttb~0)
(\lambda ?_A . ?_b)
$ is a refinement problem.  
The appropriate context is 
$
\Gamma_1 = \cdot , \maybeA~\ttb
$. This signature in Introduction is adjusted to nameless signature $\sig$.
  \end{example}

 A
\emph{refinement} of 
a term
is a pair of assignments
$(\rho, R)$ such that 
$\rho : ~?_\mathcal{V} \to t$ is an assignment of (extended) terms to term-level
metavariables and
$R : ~?_\mathcal{B} \to T$ is an assignment of (extended) types to type-level
metavariables.
We define application of refinement $(\rho,R)(-)$ to terms, types and kinds by
induction on definition of the syntactic object. 

\pagebreak
\begin{definition}[Refinement application]
    Let $\rho : ~?_\mathcal{V} \to t$ be an assignment of terms and $R :
    ~?_\mathcal{B} \to T$ be an assignment of types. Application of the
    refinement $(\rho, R)$ to kinds, types and terms is defined by:\\
    \parbox{0.5\textwidth}{
    \begin{gather*}
        \begin{aligned}[c]
            (\rho, R)(\typeK) &= \typeK\\
            (\rho, R)(\Pi A . L) &= \Pi (\rho, R)(A) . (\rho, R)(L)\\
            (\rho, R)(\alpha) &= \alpha \\
            (\rho, R)(?_A) &= R(?_A) \\
            (\rho, R)(\Pi A . B) &= \Pi (\rho, R)(A) . (\rho, R)(B)\\
            (\rho, R)(A N) &= (\rho, R)(A) (\rho, R)(N)\\
        \end{aligned}
    \end{gather*}
    }
    \parbox{0.5\textwidth}{
    \begin{gather*}
        \begin{aligned}[c]
            (\rho, R)(c) &= c \\
            (\rho, R)(\iota) &= \iota \\
            (\rho, R)(?_a) &= \rho(?_a) \\
            (\rho, R)(\lambda x : A . M) &= \lambda x : (\rho, R)(A) . (\rho, R)(M)\\
            (\rho, R)(M N) &= (\rho, R)(M) (\rho, R)(N)\\
        \end{aligned}
    \end{gather*}
}

\end{definition}

\noindent
A \emph{solution} to
a refinement problem $t$ is a refinement $(\rho, R)$ such that $(\rho, R)(t)$ is
a well-formed term of nameless LF. That is, by Lemma~\ref{lem:ground}, $(\rho, R)(t)$
does not contain neither term- nor type-level metavariables.


Horn clause logic is usually presented using a signature that comprises sets of
function and predicate symbols and the appropriate grammar for atomic and Horn
formulae (\cf \citeN{MillerN12}). Although the same presentation can be given for
Horn clause logic with explicit proof terms, for the sake of brevity we resort
to a simpler presentation that is sufficient for our purposes.
We define atomic formulae using objects of
nameless LF and we list all the predicates that are needed for refinement
translation explicitly in the grammar. However, we make an exception in the case of
contexts and use the usual list notation.
Metavariables of extended nameless LF
are seen as logic variables.
Furthermore, we assume a finite set $\mathcal{K}$
of \emph{proof-term symbols}
and a countable set $\mathcal{D}$ of goal variables.  We denote elements of
$\mathcal{K}$ by $\kappa$ with indices and elements of $\mathcal{D}$ by
$\gamma$ with indices. For technical reasons, we also use metavariables in
positions of kinds, denoted $?_L$, and in position of indicies,
denoted $?_\iota$.
The syntax is defined as follows:
\begin{definition}[Syntax of Horn Clause Logic with Explicit Proof Terms]\label{def:HCl}
    \emph{Atomic formulae}, \emph{Horn formulae}, \emph{programs},
    \emph{proof terms}, and \emph{goals} are generated as follows:

    \parbox{3.2in}
    {
        \centering
        \begin{gather*}
        \begin{aligned}[c]
            \text{Atomic formulae} && At & ::= && 
            eq_{t}^a(t, t, T, Con) \mid
            eq_{t}^s(t, t, T, Con) \mid\\
            && & &&
            eq_{T}(T, T, K, Con) \mid
            eq_{K}(K, K, Con) \mid\\
            && & &&
            type(T, K, Con) \mid
            term(t, T, Con) \mid
            \shifteq{T}{T} \mid
            \substeq{T}{t}{T'} \mid
            \top 
        \end{aligned}
        \end{gather*}
    }

    \parbox{2.7in}
    {
        \centering
        \begin{gather*}
        \begin{aligned}[c]
            \text{Horn clauses} && HC & ::= && At \leftarrow At \land \ldots \land At\\
            \text{Proof terms} &&  ~PT & ::=~ && \mathcal{K}~PT \ldots PT
        \end{aligned}
        \end{gather*}
    }
    \parbox{2.5in}
    {
        \centering
        \begin{gather*}
        \begin{aligned}[c]
            \text{Programs} && ~\mathcal{P} & ::= && \cdot \mid P , \mathcal{K} : HC\\
            \text{Goals} && ~\mathcal{G} & ::= && \mathcal{D} : At \land \ldots \land 
                \mathcal{D} : At
        \end{aligned}
        \end{gather*}
    }
\end{definition}
The atomic formula $\top$ is intended to stand for a
formula that is always true.
The predicates $eq_t^a$ and $eq_t^s$
denote algorithmic and structural equality respectively of
terms of a certain simple type in a context, 
the predicates $eq_T$ and $eq_K$
denote equality of
terms of a certain simple kind, and
equality of kinds in a context respectively.
The predicates $\term$ and
$\type$ 
denote,  respectively, that a term or a type is well-formed in a context.
We use $\shifteq{A}{A'}$ to denote that a type $A'$ is the result of
shifting of $A$; 
and we use
$\substeq{A}{M}{A'}$ to denote that $A'$ is the result of substitution of $A$
with $M$.
We use the identifier $H$ to denote Horn clauses in \emph{HC}.
Goals in $\mathcal{G}$ are denoted by $G$. 
We use $P$ with indices to refer to programs. 
Proof terms in \emph{PT} are denoted by $\delta$, $\delta_1$, \etc

\begin{figure*}

    \framebox{$
        \sig   ;  \Gamma ; M  \vdash  ( G \mid A )
    $}\raggedright

    \vskip1em

    \parbox{2.2in}
    {
        \centering
        \AxiomC{$
            c : A \in \sig
        $}
        \RightLabel{\sc r-con}
        \UnaryInfC{$
            \sig ; \Gamma ; c \vdash (\top \mid A)
        $}
        \DisplayProof
    }
    \parbox{3.0in}
    {
        \centering
        \AxiomC{$
            ~
        $}
        \RightLabel{\sc r-t-meta}
        \UnaryInfC{$
            \sig ; \Gamma ; ?_a \vdash (?_a : \term(?_{a'}, ?_A, \Gamma) \mid \; ?_A)
        $}
        \DisplayProof
    }

    \vskip1em

    \parbox{2.5in}
    {
        \centering
        \AxiomC{$
            ~
        $}
        \RightLabel{\sc r-zero}
        \UnaryInfC{$
            \sig ; \Gamma, A ; 0 \vdash ( \shifteq{A}{?_A} \mid ?_A)
        $}
        \DisplayProof
    }
    \parbox{2.7in}
    {
        \centering
        \AxiomC{$
            \sig ; \Gamma ; \iota \vdash (G \mid A)
        $}
        \RightLabel{\sc r-succ}
        \UnaryInfC{$
            \sig ; \Gamma, B ; \sigma \iota
            \vdash
            (G \land ( \shifteq{A}{?_A} ) \mid \; ?_A)
        $}
        \DisplayProof
    }

    \vskip1em

    \parbox{5.2in}
    {
        \centering
        \AxiomC{$
            \sig ; \Gamma ; A
            \vdash
            ( G_A \mid L )
        $}
        \AxiomC{$
            \sig ; \Gamma, A ; M \vdash ( G_M \mid B )
        $}
        \RightLabel{\sc r-$\lambda$-intro}
        \BinaryInfC{$
            \sig ; \Gamma ; \lambda A . M
            \vdash
            (G_A \land G_M
            \land eq_K(L, \typeK, \Gamma)
            \mid \; \Pi A . B)
        $}
        \DisplayProof
    }

    \vskip1em

    \parbox{5.2in}
    {
        \centering
        \AxiomC{$
            \sig ; \Gamma ; M 
            \vdash
            ( G_M \mid A )
        $}
        \AxiomC{$
            \sig ; \Gamma ; N 
            \vdash
            ( G_N \mid A_2 )
        $}
        \RightLabel{\sc r-$\lambda$-elim}
        \BinaryInfC{$
            \sig ; \Gamma ; M N
            \vdash
            ( G_M \land G_N 
            \land eq_T(A, \Pi A_2 . ?_B, \typeK, \Gamma) 
            \land (\substeq{?_B}{N}{?_{B'}})
            \mid \; ?_{B'})
        $}
        \DisplayProof
    }

    \caption{Refinement of terms}
    \label{fig:refterms}
\end{figure*}

Note that, by Definition~\ref{def:HCl}, atoms in goals
are assigned variables. 
Later, proof terms computed by resolution are identified by
these variables (Definition~\ref{def:prr}).
We omit explicit mention of goal variables whenever we do not need to refer to 
proof terms that the variables identify
(fresh variables are assumed in such cases).  

\subsection{From a Refinement Problem to a  Logic Program}

In this section, we explain how a term with metavariables is transformed into a
goal, and
the signature 
into a logic program.
At the end of the section we state
that, for a refinement problem, either a goal and a program exist or else the problem
cannot be refined to a well-formed term.

 We define a calculus with two kinds of
judgements, one for transforming refinement problems into goals and the other -- for
transforming signatures into logic programs.
These judgements are defined mutually in a similar way to
the well-formedness judgements of nameless LF in Figures \ref{fig:wfnltypes} and
\ref{fig:wfnlterms}.
We use $\sig ; \Gamma ; M \vdash (G \mid A)$ to denote the transformation of a term
$M$ in a signature $\sig$ and a context $\Gamma$ to a goal $G$. The
judgement also synthesises a type $A$ of the term $M$. Similarly, 
$\sig ; \Gamma ; A \vdash (G \mid K)$ denotes a transformation of a type $A$ in
$\sig$ and $\Gamma$ to a goal $G$ while synthesising a kind $K$.

\begin{definition}[Refinement Goals]
    The judgements $\sig ; \Gamma; M \vdash (G \mid A)$ and
    $\sig ; \Gamma ; A \vdash ( G \mid L)$ 
    are given by inference rules in Figures
        \ref{fig:refterms} and 
        \ref{fig:reftypes}.
    Metavariables that do not occur among assumptions have an implicit
    freshness condition.
\end{definition}

\begin{figure*}

    \framebox{$
        \sig   ;  \Gamma ; A  \vdash  ( G \mid L )
    $}\raggedright

    \vskip1em

    \parbox{2.4in}
    {
        \centering
        \AxiomC{$
            a : L \in \sig
        $}
        \RightLabel{\sc r-tcon}
        \UnaryInfC{$
            \sig ; \Gamma ; a \vdash (\top \mid L)
        $}
        \DisplayProof
    }
    \parbox{2.8in}
    {
        \centering
        \AxiomC{$
            ~
        $}
        \RightLabel{\sc r-T-meta}
        \UnaryInfC{$
            \sig ; \Gamma ; ?_A \vdash (\type(?_A , ?_L, \Gamma) \mid ?_L)
        $}
        \DisplayProof
    }

    \vskip1em

    \parbox{5.2in}
    {
        \centering
        \AxiomC{$
            \sig ; \Gamma ; A \vdash ( G_A \mid L_1)
        $}
        \AxiomC{$
            \sig ; \Gamma , A ; B \vdash ( G_B \mid L_2 )
        $}
        \RightLabel{\sc r-$\Pi$-intro}
        \BinaryInfC{$
            \sig ; \Gamma ; \Pi A . B
            \vdash
            (G_A \land G_B
            \land eq_K(L_1, \typeK, \Gamma)
            \land eq_K(L_2, \typeK, \Gamma)
            \mid \typeK)
        $}
        \DisplayProof
    }

    \vskip1em

    \parbox{5.2in}
    {
        \centering
        \AxiomC{$
            \sig ; \Gamma ; A \vdash ( G_A \mid L )
        $}
        \AxiomC{$
            \sig ; \Gamma ; M 
            \vdash
            ( G_M \mid B )
        $}
        \RightLabel{\sc r-$\Pi$-elim}
        \BinaryInfC{$
            \sig ; \Gamma ; A M
            \vdash
            ( G_A \land G_M \land eq_K(L, \Pi B . ?_L , \Gamma) 
            \land (\substeq{?_L}{M}{?_{L'}}) \mid \; ?_{L'})
        $}
        \DisplayProof
    }

    \caption{Refinement of types}
    \label{fig:reftypes}
\end{figure*}

The inference judgement for a logic program generation is denoted by $\sig
\vdash_{\text{Prog}} P$ where $\sig$ is a signature and $P$ is a generated logic program.
A generated logic program contains clauses that represent inference rules of type
theory and clauses that are generated from a signature $\mathcal{S}$. 
The clauses that represent inference rules of LF are the same for all programs and Definition~\ref{def:Pempty}
gives a minimal program $P_{e}$ that contains only these clauses. 
\begin{definition}
    \label{def:Pempty}
    Let  $P_{e}$ be a program with clauses that represent inference rules for
    well-formedness of terms and types:
    \begin{gather*}
        \begin{aligned}[clr]
            \kappa_{true} : \quad & \top \leftarrow  
                && \\
            \kappa_{0} : \quad & proj(0, ?_A, ?_A' : \; ?_\Gamma) \leftarrow  
                \; \shifteq{?_{A'}}{?_A}
                && \\
            \kappa_{\sigma} : \quad & proj(\sigma(?_\iota), ?_A, ?_B : \; ?_\Gamma) \leftarrow  
                proj(?_\iota, ?_A', ?_{\Gamma})
                \land
                \shifteq{?_{A'}}{ ?_A }
                && \\
            \kappa_{proj} : \quad & \type(?_\iota, ?_A, \typeK, ?_\Gamma) \leftarrow  
                            proj(?_\iota, ?_A, ?_\Gamma) \\
            \kappa_{\text{T-elim}} : \quad & \type(?_{A} ?_{M}, ?_{L}, ?_\Gamma) \leftarrow  
                \type(?_A , \Pi ?_{A_1}.?_{L'}, ?_\Gamma) \land
                \term(?_M, ?_{A_2}, ?_\Gamma) \land 
                \\ & \quad \quad 
                eq_T(?_{A_1},?_{A_2},\typeK, ?_\Gamma) 
                \land \substeq{?_{L'}}{?_M}{?_{L}}
                \\
            \kappa_{\text{T-intro}} : \quad & \type( \Pi ?_A . ?_B , \typeK, ?_\Gamma) \leftarrow
                \type(?_A, \typeK, ?_\Gamma) 
                \land
                \type(?_B, \typeK, ?_B : \;  ?_{\Gamma}) 
                \\
            \kappa_{\text{t-elim}} : \quad & \term(?_{M} ?_{N}, ?_{B}, ?_\Gamma) \leftarrow  
                \term(?_M , \Pi ?_{A_1}.?_{B'}, ?_\Gamma) \land
                \term(?_N, ?_{A_2}, ?_\Gamma) \land 
                \\ & \quad \quad 
                eq_T(?_{A_1},?_{A_2},\typeK, ?_\Gamma) 
                \land \substeq{?_{B'}}{?_N}{?_{B}} 
                \\
            \kappa_{\text{t-intro}} : \quad & \term( \lambda ?_A . ?_M , \Pi
                ?_A . ?_B, ?_\Gamma) \leftarrow
                \type(?_A, \typeK, ?_\Gamma) 
                \land
                \term(?_M, ?_{B}, ?_{\Gamma}) 
        \end{aligned}
    \end{gather*}

    \noindent Further, there are clauses that represent weak algorithmic equality
    of types, algorithmic and structural equality of termss, and weak head
    reduction of terms:
        \begin{gather*}
            \begin{aligned}[clr]
                \kappa_{\text{eqTintro}} : \quad & eq_T(\Pi ?_{A_1} . ?_{A_2}, 
                    \Pi ?_{B_1} . ?_{B_2} , \type, ?_\Gamma) \leftarrow  
                        eq_T(?_{A_1}, ?_{B_1}, \type, ?_\Gamma)
                        \land
                    \\ & \quad \quad
                    eq_T(?_{A_2}, ?_{B_2}, \type, ?_{A_1} : \; ?_\Gamma)
                    && \\
                \kappa_{\text{eqTelim}} : \quad & eq_T(?_A ?_M , ?_B ?_N , ?_L,
                ?_\Gamma) \leftarrow  
                    eq_T(?_A, ?_B, \Pi ?_C . ?_L, ?_{\Gamma}) \land
                    eq^a_t(?_M, ?_N, ?_C, ?_{\Gamma})
                    && \\
                \kappa_{\text{eqtzero}} : \quad & eq^s_t(0_\Gamma , 0_\Gamma,
                ?_A, ?_A : \; ?_\Gamma)) \leftarrow  
                    && \\
                \kappa_{\text{eqtsucc}} : \quad & eq^s_t(\sigma ?_{\iota_\Gamma} ,
                \sigma ?_{\iota'_\Gamma}, ?_A, ?_B : \; ?_\Gamma) \leftarrow  
                    eq^s_t(?_{\iota_\Gamma}, ?_{\iota'_\Gamma}, ?_A, ?_\Gamma)
                    && \\
                \kappa_{\text{eqtrefl}} : \quad & eq^s_t(?_a , ?_a, ?_A, ?_\Gamma) \leftarrow  
                    && \\
                \kappa_{\text{eqtelim}} : \quad & eq^s_t(?_{M_1} ?_{M_2} ,
                ?_{N_1} ?_{N_2}, ?_B, ?_\Gamma) \leftarrow  
                    eq^s_t(?_{M_1}, ?_{N_1}, \Pi ?_A . ?_B, ?_{\Gamma}) \land
                    eq^a_t(?_{M_2}, ?_{N_2}, ?_B, ?_{\Gamma})
                    && \\
                \kappa_{\text{eqtwhrl}} : \quad & eq^a_t(?_M , ?_N, ?_A,
                ?_{\Gamma}) \leftarrow  
                    whr(?_M, ?_{M'}), eq_t(?_{M'}, ?_{N}, ?_A, ?_\Gamma)
                    && \\
            \end{aligned}
        \end{gather*}
        \begin{gather*}
            \begin{aligned}[clr]
                \kappa_{\text{eqtwhrr}} : \quad & eq^a_t(?_M , ?_N, ?_A, ?_\Gamma) \leftarrow  
                    whr(?_N, ?_{N'}), eq^a_t(?_M, ?_{N'}, ?_A, ?_\Gamma)
                    && \\
                \kappa_{\text{eqtstr}} : \quad & eq^a_t(?_M , ?_N, ?_A, ?_\Gamma) \leftarrow  
                    eq^s_t(?_M, ?_N, ?_A, ?_\Gamma)
                    && \\
                \kappa_{\text{eqtexp}} : \quad & eq^a_t(?_M , ?_N, \Pi ?_A .
                ?_B, ?_\Gamma) \leftarrow  
                    \shifteq{?_M}{?_{M'}},
                    \shift{?_N}{?_{N'}} \land
                    eq^a_t(?_{M'} 0, ?_{N'} 0, ?_B, ?_A : \; ?_\Gamma)
                    && \\
                \kappa_{\text{eqsimpl}} : \quad & eq^a_t(?_M , ?_{M'}, ?_A ?_N
                , ?_\Gamma) \leftarrow  
                    eq^a_t(?_M, ?_{M'}, ?_A, ?_G)
                    && \\
                \kappa_{\text{whrs}} : \quad & whr((\lambda ?_A . ?_M) ?_N ,
                ?_{M'}) \leftarrow  
                    ?_M [?_N / 0] \equiv ?_{M'}
                    && \\
                \kappa_{\text{whrh}} : \quad & whr( ?_M ?_N , ?_{M'} ?_N) \leftarrow  
                    whr(?_M , ?_{M'})
                    && \\
            \end{aligned}
        \end{gather*}

        \noindent
    Finally, there are clauses that represent shifting and substitution on terms
    and types:
        \begin{gather*}
            \begin{aligned}[clr]
                \kappa_{\text{shiftTtintro}} : \quad & 
                    \shifteqI{ (\Pi ?_A . ?_M)}{\iota}{(\Pi ?_{A'} . ?_{M'})}
                    \leftarrow  
                    \shifteqI{?_A}{\iota}{?_{A'}}
                    {} \land {}
                    \shifteqI{?_M}{\sigma \iota}{?_{M'}}
                    && \\
                \kappa_{\text{shiftTtintro}} : \quad & 
                    \shifteqI{ (\lambda ?_A . ?_M)}{\iota}{(\lambda ?_{A'} . ?_{M'})}
                    \leftarrow  
                    \shifteqI{?_A}{\iota}{?_{A'}}
                    {} \land {}
                    \shifteqI{?_M}{\sigma \iota}{?_{M'}}
                    && \\
                \kappa_{\text{shifttelim}} : \quad & 
                    \shifteqI{ (?_M ?_N)}{\iota}{(?_{M'} ?_{N'})} 
                    {} \leftarrow {}
                    \shifteqI{?_M}{\iota}{?_{M'}}
                    {} \land {}
                    \shifteqI{?_N}{\iota}{?_{N'}}
                    && \\
                \kappa_{\text{shifttgt}} : \quad & 
                    \shifteqI{\iota}{0}{\sigma \iota}
                    \leftarrow  
                    && \\
                \kappa_{\text{shifttpred}} : \quad &
                    \shifteqI{0}{\sigma \iota}{0}
                    \leftarrow  
                    && \\
                \kappa_{\text{shifttstep}} : \quad & 
                    \shifteqI{\sigma \iota}{\sigma \iota'}{\sigma \iota''}
                    {} \leftarrow  {}
                    \shifteqI{\iota}{\iota'}{\iota''}
                    && \\
                \kappa_{\text{substTtintro}} : \quad & 
                    \substeqI{ (\Pi ?_A . ?_M)}{?_N}{\iota}{(\Pi ?_{A'} . ?_{M'})}
                    \leftarrow  
                    (\substeqI{?_A}{?_N}{\iota}{?_{A'}})
                    \land
                    (\shifteqI{?_N}{0}{?_N'})
                    \land
                    \substeqI{?_M}{?_N'}{\sigma \iota}{?_{M'}}
                    && \\
                \kappa_{\text{substintro}} : \quad & 
                    \substeqI{ (\lambda ?_A . ?_M)}{N}{\iota}{(\lambda ?_{A'} . ?_{M'})}
                    \leftarrow  
                    (\substeqI{?_A}{\iota}{?_{A'}})
                    \land
                    (\shifteqI{?_N}{0}{?_N'})
                    \land
                    \substeqI{?_M}{?_N'}{\sigma \iota}{?_{M'}}
                    && \\
                \kappa_{\text{substtelim}} : \quad & 
                    \substeqI{ (?_{M_1} ?_{M_2})}{?_N}{\iota}{(?_{M_1'} ?_{M_2'})} 
                    {}\leftarrow {}
                    \substeqI{?_{M_1}}{?_N}{\iota}{?_{M_1'}}
                    {} \land {}
                    \substeqI{?_{M_2}}{?_N}{\iota}{?_{M_2'}}
                    && \\
                \kappa_{\text{substz}} : \quad & 
                    \substeqI{0}{?_N}{0}{?_N}
                    \leftarrow  
                    && \\
                \kappa_{\text{substs}} : \quad &
                    \substeqI{0}{?_N}{\sigma \iota}{0}
                    \leftarrow  
                    && \\
                \kappa_{\text{substgt}} : \quad & 
                    \substeqI{\sigma \iota}{?_N}{0}{\sigma \iota}
                    \leftarrow  
                    && \\
                \kappa_{\text{substpred}} : \quad & 
                    \substeqI{\sigma \iota}{?_N}{\sigma \iota'}{\sigma \iota''}
                    \leftarrow  
                    \substeqI{\iota}{?_N}{\iota'}{\iota''}
                    && \\
            \end{aligned}
        \end{gather*}

\end{definition}

\noindent
The clauses in Definition \ref{def:Pempty} correspond to judgements in
Figures \ref{fig:wfnltypes}--\ref{fig:wfsigcon}.
They are direct
translations of the inference rules of nameless LF in these figures.
The judgement $\sig \vdash_{\text{Prog}} P$ extends $P_{e}$ with a clause
for each type and term constant in $\sig$
and initialises shifting and substitution with term and type-level
constants as constant under the operation.

\begin{definition}[Refinement Program]
    The judgement $\sig \vdash_{\text{Prog}} P$ is given by the inference rules of Figure
    \ref{fig:refsig}.
\end{definition}

%
%
%
%
%
%

\begin{figure}

    \framebox{$
        \sig   \vdash_{\text{Prog}}  P 
    $}\raggedright

    \vskip1em

    \noindent
    \parbox{0.15\textwidth}
    {
        \centering
        \AxiomC{$
        $}
        \UnaryInfC{$
            \cdot \vdash_{\text{Prog}} P_{\text{e}}
        $}
        \DisplayProof
    }
    \parbox{0.84\textwidth}
    {
        \centering
        \AxiomC{$
            \sig \vdash_{\text{Prog}} P
        $}
        \UnaryInfC{$
            \sig , c : A \vdash_{\text{Prog}} P 
            , \quad \kappa_c : \term(c, A , {}?_\Gamma ) \leftarrow 
            , \quad \kappa_{\text{shift}_{c}} : (\shifteqI{c}{0}{c}) \leftarrow ,
        $}
        \noLine
        \UnaryInfC{$
            \quad \quad 
            \kappa_{\text{subst}_c} : \substeqI{c}{ ?_M }{ 0 ]}{c} \leftarrow ,
            \quad \kappa_{eq^s_c} : eq^s(c, c, A, ?_\Gamma)
            \leftarrow
        $}
        \DisplayProof
    }
    \vskip1em
    \noindent
    \parbox{\textwidth}
    {
        \centering
        \AxiomC{$
            \sig \vdash_{\text{Prog}} P
        $}
        \UnaryInfC{$
            \sig , a : L \vdash_{\text{Prog}} P , \quad
            \kappa_{\text{shift}_{\alpha}} : ( \shifteqI{\alpha}{0}{\alpha} ) \leftarrow,
            \kappa_{\text{subst}_{\alpha}}  : \substeqI{\alpha}{ ?_M }{
                0}{\alpha} \leftarrow,
            \quad \kappa_{eq_T} : eq_T(\alpha, \alpha, L, ?_\Gamma) \leftarrow,
        $}
        \noLine
        \UnaryInfC{$
            \kappa_{eq^a{}_{\alpha}} : eq^a(?_N, ?_M, \alpha, ?_\Gamma) \leftarrow
                eq^s(?_M, ?_N, \alpha, ?_\Gamma)
        $}
        \DisplayProof
    }

    \caption{Refinement of signatures, with operations}
    \label{fig:refsig}
\end{figure}

\noindent
The Figure \ref{fig:refsig} gives definition of signature refinement.
The refinement judgement of a signature into a program concludes our
transformation of refinement problem into a goal and a program. 

\begin{theorem}[Decidability of Goal Construction]
  \label{thm:dec}
    Let $M$ be a refinement problem in a well-formed signature $\sig$ and a
    well-formed context
    $\Gamma$ such that a solution $(\rho, R)$ exists. Then inference rules
    in Figures \ref{fig:refterms} and \ref{fig:reftypes} construct the goal $G$
    and the extended type $A$ such that $\sig ; \Gamma ; M \vdash (G \mid A)$.
\end{theorem}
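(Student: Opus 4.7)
The plan is to proceed by mutual structural induction on the refinement problem $M$ (and analogously on extended types $A$, since the two kinds of judgements are defined mutually in Figures \ref{fig:refterms} and \ref{fig:reftypes}). The inference rules are syntax-directed with respect to the outermost constructor of $M$ or $A$, so in each case exactly one rule is applicable, and its premises reduce either to a base case (constant lookup, metavariable introduction, or de Bruijn projection) or to a strictly smaller refinement problem which is handled by the induction hypothesis. The freshness conditions on output metavariables like $?_A$, $?_B$, $?_L$, $?_{L'}$ are satisfied simply by picking fresh elements of $?_\mathcal{B}$ or $?_\mathcal{V}$ at each step.

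The only rules with nontrivial side conditions are the base cases: r-con requires $c : A \in \sig$, r-tcon requires $a : L \in \sig$, and r-zero / r-succ require the context to be of the form $\Gamma, A$ (respectively long enough to project out the given index). For each of these I would appeal to the assumption that a solution $(\rho, R)$ exists. Since $(\rho, R)$ only replaces metavariables, every constant $c$ or $\alpha$ and every de Bruijn index occurring in $M$ occurs unchanged in $(\rho, R)(M)$, and by Lemma~\ref{lem:ground} together with inversion on the well-formedness rules in Figures \ref{fig:wfnltypes} and \ref{fig:wfnlterms} the constant must be declared in $\sig$ and every index must be in range for $\Gamma$. Thus whenever $M$ would force one of these base rules, its side condition holds.

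For the inductive cases (r-$\lambda$-intro, r-$\lambda$-elim, r-$\Pi$-intro, r-$\Pi$-elim, r-succ), the subterms are themselves refinement problems in appropriate, possibly extended, contexts. An auxiliary lemma \emph{a solution of $M$ in $\Gamma$ yields solutions of its syntactic subcomponents in the induced contexts} is needed: the refinement $(\rho, R)$ restricted to the metavariables appearing in each subcomponent, together with the appropriate sub-derivation extracted by inversion on the well-formedness derivation of $(\rho, R)(M)$, supplies that solution. The induction hypotheses then yield the sub-goals $G_A$, $G_M$, $G_N$, $G_B$ and the synthesised types/kinds, and the corresponding rule of Figure \ref{fig:refterms} or \ref{fig:reftypes} combines them with fresh metavariables and the prescribed equality, shifting, or substitution atoms into a goal $G$ and a synthesised extended type $A$.

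The main obstacle is precisely this inversion-and-transfer step: we must argue that when $(\rho, R)(M)$ is derived well-typed by, say, {\sc $\Pi$-elim}, the derivation decomposes into derivations witnessing the well-typedness of $(\rho, R)(A)$ in $\Gamma$ and $(\rho, R)(N)$ in $\Gamma$ (and similarly for the binder rules, where the context is extended by the refined domain type). Once this transfer is established, the construction of $G$ and $A$ is mechanical: apply the unique applicable rule, recurse on subcomponents using the induction hypothesis supplied with the restricted refinement, and collect the fresh side-output metavariables. Decidability follows because the recursion is well-founded on the size of $M$ (and $A$), and at each step the rule to apply and the atoms to emit are fully determined by the outermost constructor.
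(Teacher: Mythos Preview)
Your proposal is correct and considerably more detailed than the paper's own proof, which consists of the single line ``By induction on the derivation of the well-formedness judgement of $(\rho, R)(M)$.'' You organise the argument by structural induction on the extended term $M$ and obtain the required well-formedness facts (constant declared in $\sig$, index in range for $\Gamma$, solutions for subcomponents) by inversion on that derivation, whereas the paper inducts on the derivation directly. Since the refinement rules in Figures~\ref{fig:refterms} and~\ref{fig:reftypes} are syntax-directed on $M$, your choice of induction variable aligns more naturally with the unique rule to apply at each step; the paper's choice has the complementary advantage that the side conditions come straight from the induction hypothesis rather than through a separate inversion lemma, at the cost of having to track which portion of $(\rho,R)(M)$ corresponds to $M$ and which was introduced by the refinement at a metavariable leaf. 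The two organisations are interchangeable here and collapse to the same argument once unfolded; your write-up simply makes explicit the inversion-and-transfer step that the paper's one-liner leaves to the reader.
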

\begin{proof}
    By induction on the derivation of the well-formedness judgement of $(\rho, R)(M)$.
\end{proof}

The next example illustrates the construction of a refinement goal.

\begin{example}[From an Extended Nameless Term to a Goal]\label{ex:dec}
  
Let us take the refinement problem $M'
= (\elimmaybeA~\ttb~0)
(\lambda ?_A . ?_b)
$ and the implicit context and signature from Example~\ref{ex:rp}.
By Theorem \ref{thm:dec} we can generate $G$ such that the judgement
$\sig; \Gamma_1; M' \vdash ( G \mid {}?_{B_7})$ holds:
\begin{gather*}
    \footnotesize
    G = \top
    \land
    \top
    \land
    eq_T(
        \Pi \bool . (
            \Pi (\maybeA 0_T) . (
                \Pi (\Pi (2_T \eqb \ffb) . \A) . (
                    \Pi (\Pi (3_T \eqb \ttb) . (\Pi \A . \A))
                    . \A
                )
            )
        ),\\
        \Pi \bool . ?_{B_1},
        \Pi \typeK . ?_{L_1}, \Gamma_1
    )
    \land
        ?_{B_1}[\ttb/0_T] \equiv \; ?_{B_2}
    \land
    \top
    \land\\
    eq_T(?_{B_2}, \Pi (\maybeA \ttb) . ?_{B_3}, \Pi \typeK . ?_{L_2}, \Gamma_1)
    \land
    ?_{B_3}[0_\Gamma/0_T] \equiv \; ?_{B_4}
    \land\\
    \type(?_A, ?_{L_3}, \Gamma_1)
    \land
    ?_b : \term(?_b, ?_{A_1}, {?_A : \Gamma_1})
    \land
    eq_K(?_{L_3}, \typeK, \Gamma_1 ) 
    \land
    ?_{A_1} [ 0_T / 0_\Gamma ] \equiv \; ?_{B_5}
    \land\\
    eq_T(?_{B_4},\Pi (\Pi ?_{A} . ?_{B_5}) . ?_{B_6}, \Pi \type . ?_{L_5}, \Gamma) \land
    ?_{B_6} [ (\lambda ?_A . ?_b) / 0_T] \equiv \; ?_{B_7}
\end{gather*}
That is, the type of $M'$ will be computed as a substitution for logic variable
$?_{B_7}$ and resolving the goal computes
assignments to $?_A$ and $?_b$ as well.
\end{example}

\begin{proposition}[Decidability of Program Construction]
  \label{thm:decsig}
  Let $\sig$ be a signature.
  Then inference rules
    in Figure \ref{fig:refsig} construct the program $P$
    such that $\sig \vdash_{Prog} P$.
\end{proposition}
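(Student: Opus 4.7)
The plan is to proceed by structural induction on the signature $\sig$, exploiting the fact that the inference rules in Figure~\ref{fig:refsig} are syntax-directed with respect to the three ways a signature can be formed according to Definition~\ref{def:nlf}. Since the statement does not require $\sig$ to be well-formed, no semantic side-conditions need to be discharged; I only need to verify that in each case an appropriate rule applies and produces a well-formed program.

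In more detail, the base case has $\sig = \cdot$, and the first rule of Figure~\ref{fig:refsig} immediately yields $\cdot \vdash_{\text{Prog}} P_{e}$, where $P_{e}$ is the fixed program of Definition~\ref{def:Pempty}. For the inductive step, suppose $\sig = \sig', c : A$ for some term constant $c$. By the induction hypothesis applied to $\sig'$, there exists a program $P'$ with $\sig' \vdash_{\text{Prog}} P'$; then the second rule of Figure~\ref{fig:refsig} applies and produces $\sig', c : A \vdash_{\text{Prog}} P''$, where $P''$ extends $P'$ with the four clauses $\kappa_c$, $\kappa_{\text{shift}_c}$, $\kappa_{\text{subst}_c}$, and $\kappa_{eq^s_c}$ obtained directly from the constant declaration. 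The case $\sig = \sig', a : L$ for a type constant $a$ is handled symmetrically by the third rule, extending the inductively constructed $P'$ with the clauses $\kappa_{\text{shift}_\alpha}$, $\kappa_{\text{subst}_\alpha}$, $\kappa_{eq_T}$, and $\kappa_{eq^a{}_\alpha}$.

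Termination of the construction follows because a signature, by Definition~\ref{def:nlf}, is a finite list, so the induction bottoms out in finitely many steps. Each extension step is a purely syntactic operation over the given constant declaration and requires no further search or computation; in particular, the new clauses are determined uniquely by the rule, so the construction is in fact deterministic.

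I do not expect a hard step here: unlike Theorem~\ref{thm:dec}, where decidability rests on a well-typedness assumption about the refinement problem and proceeds by induction on an external derivation, the present statement is essentially an observation that the rules of Figure~\ref{fig:refsig} form a total structural recursion on signatures. The only mildly delicate point is to notice that no premise of the three rules places any condition on $\sig$ beyond having a derivation $\sig \vdash_{\text{Prog}} P$ already in hand for the prefix, which is precisely what the induction hypothesis supplies.
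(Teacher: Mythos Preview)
The paper does not actually give a proof of this proposition; it is stated and immediately followed by an illustrative example. Your structural induction on $\sig$ is the natural argument the paper leaves implicit, and it is correct: the three rules in Figure~\ref{fig:refsig} are syntax-directed on the three signature constructors of Definition~\ref{def:nlf}, have no side conditions beyond the inductive premise, and each step adds a fixed finite set of clauses, so the construction terminates and is total.
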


We develop our running example further to illustrate the proposition:

\begin{example}[From a Signature to a Program]\label{ex:decsig}
  The signature $\sig$ contains 
$\elimeqb$ hence the
generated program contains the clause: 
\begin{equation*}
    \kappa_{\elimeqb} :
    \term(
        \elimeqb,
        \Pi \ttb \eqb \ffb . \A,
        \typeK,
        ?_\Gamma)
    \leftarrow
\end{equation*}
The following clauses come from the program $P_e$ and represent inference rules of the internal language:
\begin{equation*}
    \begin{aligned}
        \kappa_{0}  & : \term(0, ?_A, ?_{A'} : \; ?_\Gamma) && \leftarrow  
            \; \shifteq{?_{A'}}{?_{A}}
            \\
        \kappa_{\elim}  & : \term(?_a ?_b, ?_B, ?_\Gamma) && \leftarrow 
        \term(?_a, \Pi ?_A . ?_{B'}, ?_\Gamma) \land
        \term(?_b, ?_A , ?_\Gamma) \land
        \\
        & && \quad \quad
        eq_T(?_B , ?_{B'}, \typeK, ?_\Gamma) \land
        \substeq{?_{B'}}{?_b}{?_B}
   \end{aligned}
\end{equation*}
\end{example}

\noindent
Example~\ref{ex:dec} shows unresolved meta-variables in the goal, and 
Example~\ref{ex:decsig} gives a program against which to resolve the goal.
Now the proof-relevant resolution comes into play.

\section{Proof-Relevant Resolution and Soundness}
\label{sec:sound}

We utilise a variant of proof-relevant resolution \cite{FuK17} as the inference
engine for solving refinement problems. A substitution of logic variables as
well as substitution composition is defined in the usual way. We use $\theta$,
$\theta'$ to denote substitutions and $\theta \circ \theta'$ to denote
composition of substitutions $\theta$ and $\theta'$. We use $[\delta_1 /
\gamma_1, \dots, \delta_n / \gamma_n]$ to denote an assignment that assigns, in
order, proof terms $\delta_1$ to $\delta_n$ to proof-term variables
$\gamma_1$ to $\gamma_n$. The resolution judgement
$P \vdash^\theta_{[ \delta_1 / ?_{\gamma_1},
\dots, \delta_n / ?_{\gamma_n} ]} G$
states that a goal $G$ is
resolved by a program $P$ while computing an answer substitution $\theta$ and an
assignment of proof terms. 
The
judgement makes use of an auxiliary judgement for resolution of atomic goals
with a proof term $\delta$, denoted
$P \vdash^\theta_{[ \delta_1 / ?_{\gamma_1}, \dots, \delta_n / ?_{\gamma_n} ]} \delta : At$.
\begin{definition}[Proof-Relevant Resolution]\label{def:prr}
    Let $P$ be a program and $G$ a goal, $At$, $At_1, \dots, At_n$ be atomic
    formulae, $\delta, \delta_1, \dots, \delta_n$ be proof terms
    and $\gamma_1, \dots, \gamma_n$ proof-term variables. The
    judgements $P \vdash^\theta_{[ \delta_1 / \gamma_1, \dots, \delta_n /
    \gamma_n ]}  G$ and
    $P \vdash^\theta \delta : At$ 
    are defined by the inference rules\\

    \parbox{4.8in}
    {
        \centering
        \AxiomC{$
            P \vdash^\theta \delta_1 : At_1
        $}
        \AxiomC{$
            \dots
        $}
        \AxiomC{$
            P \vdash^\theta \delta_n : At_n
        $}
        \TrinaryInfC{$
            P \vdash^{\theta}_{
                [ \delta_1 / {\gamma_1} , \dots, \delta_n / {\gamma_n} ]
        }
            {\gamma_1} : At_1
            \land
            \dots
            \land
            {\gamma_n} : At_n \quad
        $}
        \DisplayProof
    }

    \noindent
    and

    \parbox{4.8in}
    {
        \centering
        \AxiomC{$
            P \vdash^{\theta'} \delta_1 : \theta At_1
        $}
        \AxiomC{$
            \dots
        $}
        \AxiomC{$
            P \vdash^{\theta'} \delta_n : \theta At_n
        $}
        \LeftLabel{$\kappa : At' \leftarrow At_1 \land \dots \land At_n \in P$}
        \TrinaryInfC{$
            P \vdash^{\theta \circ \theta'}_{
        }
            \kappa \delta_1 \dots \delta_n : At
        $}
        \DisplayProof
    }

    \noindent
    assuming that $\theta At' = \theta At$.
\end{definition}

\noindent
We do not discuss a particular resolution strategy here, but instead refer the
reader to the work of \citeN{FuK17}.

Assume that $G$ and $P$ are a goal and a program that originate from a
refinement problem $M$ in signature $\sig$.
An answer substitution for $G$
computed by $P$ provides a solution to the type-level metavariables
in $M$. Similarly the computed assignment of
proof terms to proof variables provides a solution to the term-level metavariables in
$M$.

We continue with our running example, building upon Examples~\ref{ex:rp}--\ref{ex:decsig}.

\begin{example}[Proof-relevant Resolution Trace]\label{ex:decsig2}
  The resolution trace of our example is rather long, and we show only a fragment.
  Suppose that, in several resolution steps denoted by $\leadsto^*$, the goal $G$ given in Example~\ref{ex:dec} resolves as follows:
\begin{equation*}
G \leadsto^* \quad
    ?_b : \term(?_b, \A, \ttb\eqb\ffb : \Gamma_1)
  \end{equation*}

 \noindent  The computed substitution assigns $
    (\Pi (\Pi (\ttb \eqb \ttb) . (\Pi \A . \A)) . \A$ to the logic variable $?_{B_7}$.
  We now show the trace for
  the remaining goal $ ?_b : \term(?_b, \A, \ttb\eqb\ffb : \Gamma_1)$.
Given the clauses of Example~\ref{ex:decsig}, a resolution trace that computes a proof term for $?_b$ can be given as follows: 
\begin{equation*}
    \footnotesize
    \begin{split}
        \termSPLIT (?_a, \A, \ttb \eqb \ffb : \Gamma_1)
        \leadsto_{\kappa_{\elim}}\\
        & \term(?_{a_1} ?_{a_2}, \Pi ?_{A} . ?_{B'}, \ttb\eqb\ffb : \Gamma_1)
        \land
        \term(?_{a_{2}'}, ?_{A} , \Gamma_1)) \\
        & 
        \quad \quad
        \quad \quad
        \quad \quad
        \quad \quad
        \quad \quad
        \quad \quad
        \land
        eq_T(?_{B_4}, ?_{B'}, \typeK, \ttb\eqb\ffb : \Gamma_1)
        \leadsto
        _{\kappa_{\elimeqb}}
        ^{[?_{A} \mapsto \ttb \eqb \ffb, ?_{B'} \mapsto \A]}\\
        & \term(?_{a_2'}, \ttb \eqb \ffb, (\ttb \eqb \ffb : \Gamma_1)) \land
        eq_T( \ttb \eqb \ffb, \ttb \eqb \ffb, \typeK, \ttb \eqb \ffb : \Gamma_1))
        \leadsto_{\kappa_{0}} \\
        & eq_T( \ttb \eqb \ffb, \ttb \eqb \ffb, \typeK, \ttb \eqb \ffb : \Gamma_1))
        \leadsto^{*} \bot
    \end{split}
\end{equation*}
Above, we omit writing full proof terms, but record the witnesses as indices of $\leadsto$.
%
The assignment to the logic variable $?_A$ is $\A$ and
the subterm of the computed proof term that is bound to $?_b$ is
$
    \kappa_{\elim} \kappa_{\elimeqb} \kappa_{0} \delta_{eq_T}
$
where the subterm $\delta_{eq_T}$ is a witness of the appropriate type equality.
\end{example}

 Since we have used the types and terms of nameless LF to define our atomic formulae, the
computed substitution can be used directly. The interpretation of the computed
assignment of proof terms depends on assignment of atomic proof term symbols in
the program $P_e$. We define a mapping that gives the intended
interpretation:
\begin{definition}[Interpretation of Proof Terms]
    \label{def:in}
    We define interpretation of proof terms $\transl{-} : PT \to T$ as follows:

    \parbox{1.9in}
    {
    \begin{gather*}
        \begin{aligned}[c]
            \transl{ \kappa_\sigma \delta \delta_\iota \delta'} &= \sigma
                \transl{\delta_\iota} && \\
            \transl{ \kappa_{proj} \delta_\iota } &= \transl{\delta_\iota} && \\
            \transl{ \kappa_{\text{T-elim}} \delta_M \delta_N \delta \delta'}
                &= \transl{\delta_M} \transl{\delta_N} && \\
        \end{aligned}
    \end{gather*}
    }
    \parbox{2.3in}
    {
    \begin{gather*}
        \begin{aligned}[c]
            \transl{ \kappa_{\text{T-intro}} \delta_A \delta \delta_B }
                &= \Pi \transl{\delta_A} .
                \transl{\delta_B} && \\
            \transl{ \kappa_{\text{t-elim}} \delta_A \delta_M \delta \delta'}
                &= \transl{\delta_A} \transl{\delta_M} && \\
            \transl{ \kappa_{\text{t-intro}} \delta_A \delta \delta_M \delta' }
                &= \lambda \transl{\delta_A} .
                \overrightarrow{\transl{\delta_M}} && \\
        \end{aligned}
    \end{gather*}
    }
    \parbox{0.8in}
    {
    \begin{gather*}
        \begin{aligned}[c]
            \transl{ \kappa_{0} } &= 0 && \\
            \transl{ \kappa_{c}  }
                &= c \\
            \transl{ \kappa_{a}  }
                &= a .
        \end{aligned}
    \end{gather*}
    }
\end{definition}
%
We extend $\transl{-}$ to assignments of proof terms by composition and 
we use $\transl{R}$ to denote the composition ($\transl{-} \circ R$).
\begin{example}
  \label{ex:interp}
In Example~\ref{ex:decsig2}, the computed proof term bound to $?_b$ will be interpreted as follows:
\begin{equation*}
    \transl{
        \kappa_{\elim} \kappa_{\elimeqb} \kappa_{0} \delta_{eq_T}
    } = \elimeqb 0
\end{equation*}
Hence, the original problem is refined to $\elimmaybeA~\ttb~0~(\lambda \A .
\elimeqb 0)$ while the computed type is
    $((\ttb \eqb \ttb ) \to \A \to \A ) \to \A$
.
\end{example}
Finally, the above interpretation allows us to state the soundness of our system:

\begin{theorem}[Soundness of Proof-Relevant Resolution for Generated Goals and Programs]\label{th:sound}
    Let $M$ be a term in the extended syntax with signature $\sig$. Let
    $P$ and $G_M$ be a program and a goal such that $\sig, \cdot
    \vdash (G_M | A)$ and $\sig \vdash_{\text{Prog}} P$ respectively. Let
    $\rho$, $R$ be a substitution and a proof term assignment computed by proof-relevant resolution such that
    $P \vdash^\rho_R G_M$.
    Then if there is a solution for a well-formed term, then there are 
    solutions $(\rho', R')$ and $(\rho'', R'')$ 
    such that $(\rho',R')M$ is a well-formed term and
    $$(\rho'', R'')((\rho, \transl{R}) M) = (\rho', R') M$$
\end{theorem}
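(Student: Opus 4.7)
The plan is to proceed by induction on the derivation of the refinement judgement $\sig; \cdot; M \vdash (G_M \mid A)$, whose structure mirrors that of $M$. For each inference rule in Figures \ref{fig:refterms} and \ref{fig:reftypes}, I will show that whenever a solution $(\rho', R')$ exists making $(\rho', R')M$ well-formed in $\sig$, the successful resolution of $G_M$ against $P$ produces a substitution $\rho$ and a proof-term assignment $R$ whose interpreted pair $(\rho, \transl{R})$ is ``more general'' than $(\rho', R')$ in the precise sense demanded by the conclusion: there exists $(\rho'', R'')$ with $(\rho'', R'')((\rho, \transl{R}) M) = (\rho', R') M$.

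For the base cases I would first dispatch the rules r-con and r-tcon, where the generated goal is $\top$ and the returned type is known from $\sig$; the required equality holds trivially with $(\rho'', R'') = (\rho', R')$. The de Bruijn index cases r-zero and r-succ involve only $\shifteq{-}{-}$ atoms whose solutions are determined by $\Gamma$, and the interpretation of the proof term via Definition \ref{def:in} returns the original index. The key base cases are the metavariable rules r-t-meta and r-T-meta: here the generated goal $?_a : \term(?_{a'}, ?_A, \Gamma)$ is resolved against the clauses of $P$, whose construction in Definition \ref{def:Pempty} and Figure \ref{fig:refsig} is a direct translation of the rules of Figures \ref{fig:wfnltypes}--\ref{fig:wfnlterms}. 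Because a solution $(\rho', R')$ by assumption furnishes a derivation of $\sig; \Gamma \vdash \rho'(?_a) : R'(?_A)$, this derivation can be transported clause by clause into a proof-relevant derivation of the goal, and the interpretation clauses of Definition \ref{def:in} are designed precisely so that $\transl{R(?_a)}$ reconstructs $\rho'(?_a)$ up to further instantiation via $(\rho'', R'')$.

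For the inductive rules r-$\lambda$-intro, r-$\lambda$-elim, r-$\Pi$-intro, and r-$\Pi$-elim, the generated goal is a conjunction of subgoals for the immediate subterms together with auxiliary $eq_T$, $eq_K$, $\shifteq{-}{-}$, and $\substeq{-}{-}{-}$ atoms that enforce the premises of the matching well-formedness rule. I would invoke the inductive hypothesis on each subterm, compose the resulting substitutions using the associativity of substitution composition made explicit in Definition \ref{def:prr}, and then dispatch the auxiliary atoms using the clauses of $P_e$ for shifting, substitution, and weak algorithmic equality; these clauses are sound with respect to the syntactic operations of Section \ref{sec:nlfodtt}, so the combined substitution and proof-term assignment refine $M$ consistently with $(\rho', R')$.

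The main obstacle will be the term-level metavariable case when $?_b$ occupies a deep position such as the body of a $\lambda$, where resolution must reconstruct an entire subderivation and the interpretation $\transl{-}$ must collapse that derivation back to the synthesised term. Here the delicate point is that proof terms produced by resolution contain extra bookkeeping arguments (witnesses for $eq_T$, shifting, substitution) that are absent from the original term, and I must verify that Definition \ref{def:in} erases exactly these arguments so that $\transl{R(?_b)}$ is a genuine LF term and not merely some inhabitant of the correct type. A secondary obstacle is coherence of unifiers computed from disjoint subtrees: the freshness conditions on metavariables introduced by the refinement rules ensure that overlap between separate answer substitutions can only occur on variables whose instantiations are forced to agree by the shared equality atoms, which lets $(\rho'', R'')$ be defined uniformly on the union of domains. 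Combining these observations yields the desired equality $(\rho'', R'')((\rho, \transl{R}) M) = (\rho', R') M$ and completes the induction.
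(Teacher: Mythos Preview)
Your proposal differs from the paper's proof in two respects, one minor and one structural.

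First, the minor point: the paper explicitly generalises the statement from the empty context $\cdot$ to an arbitrary well-formed context $\Gamma$ before doing any induction. You start the induction at $\sig;\cdot;M\vdash(G_M\mid A)$, but the premises of \textsc{r-$\lambda$-intro} and \textsc{r-$\Pi$-intro} live in an extended context $\Gamma,A$, so your induction hypothesis as stated would not apply there. You clearly have this in mind when you discuss the inductive cases, but the generalisation should be made explicit up front.

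Second, and more substantively, the paper does \emph{not} induct on the refinement judgement. It performs a \emph{simultaneous} induction on the derivation of the well-formedness judgement $\sig;\Gamma\vdash(\rho',R')M:A'$ and on the resolution derivation $P\vdash^\rho_R G$. The reason is exactly the obstacle you flag: at a metavariable the refinement derivation is a leaf, but both the well-formedness derivation of $\rho'(?_a)$ and the resolution derivation that produces the proof term bound to $?_a$ can be arbitrarily deep. Your phrase ``this derivation can be transported clause by clause into a proof-relevant derivation of the goal'' is itself an inductive argument on those two derivations, and it is precisely what the paper's simultaneous induction discharges directly. Your decomposition is not wrong, but it pushes the real work into an unstated lemma (soundness of $\transl{-}$ with respect to resolution of $\term/\type$ atoms) that would itself be proved by the paper's induction scheme. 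The paper's choice avoids this indirection and keeps the metavariable case uniform with the others.
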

\vskip-1em
\begin{proof}
    Generalise the statement of the theorem for an arbitrary well-formed context
    $\Gamma$. By simultaneous induction on derivation of the well-formedness
    judgement of $(\rho', R')M$ and derivation of $P \vdash^\rho_R G$. The
    theorem follows from the generalisation.
\end{proof}

Theorem~\ref{th:sound} guarantees that 
the refinement computed in Examples~\ref{ex:dec}--\ref{ex:interp} is 
well-typed in the internal language. That is, there is a derivation of the following judgement:
\begin{equation*}
    \sig ; \cdot, \maybeA \ttb \vdash 
        \elimmaybeA~\ttb~0~(\lambda \ttb\eqb\ffb .  \elimeqb 0) 
        : (\ttb \eqb \ttb ) \to \A \to \A ) \to \A
\end{equation*}
We omit the actual derivation of the judgement. However, note that it can be
easily reconstructed in a similar way 
as the intended
interpretation of proof terms is computed in Definition~\ref{def:in}. For example, in case of our running
example, the subterm $\delta_{eqT}$ of the proof term gives derivation of the
definitional equality that is necessary to verify application of $\elimeqb$ to
index $0$.

\subsection*{Implementation}

We have formalised the results in this paper using the Ott tool and the Coq theorem prover. The
source code can be found 
online\footnote{\url{https://github.com/frantisekfarka/slepice}}.
Most importantly, we formalise definitions of nameless LF.
The exported Coq definitions are then used in formalisation of decidability
of the refinement calculus. 
An implementation of translation from the extended language to logic programs and
goals is obtained from the formal
proof via code extraction into OCaml. A parser is extracted from the
formalisation of the grammar as well. The translation outputs logic programs and
goals suitable for an external resolution engine.

\section{Related and Future Work}
\label{sec:relwork}

Ideas underlying our work originate in the work of \citeN{StuckeyS02} on HM(X) type
inference as (constraint) logic programming. There are two key differences.
First, in our work we consider dependent types. Other approaches, such as that
of Sulzmann and Stuckey
do not give a motivation for the shape of
generated logic goals and programs.
We make explicit that
atomic formulae represent judgements of the type theory and that the program
originates on one hand from inference rules of the type theory and on the other
from a signature of a term. We believe that a clear identification of this
interpretation of generated goals and programs makes it feasible to adjust the
refinement calculus for different type theories.


Currently implemented systems (\cf \citeN{Pientka13}) make use of a bidirectional
approach to type checking. That is, there are separate type checking and
type synthesis phases.
The key difference between these
systems and our own work is that we do not explicitly discuss
bidirectionality. Combining this with a clear identification of atomic
formulae with judgements, and Horn clauses with inference rules,
in our opinion, makes the
presentation significantly more accessible. However,
bidirectionality in our system is still implicitly present, albeit postponed to
the resolution phase. As future work, we intend to analyse structural
resolution \cite{FuK17} for the generated goals. We intend to show that the matching
steps in the resolution correspond to type checking in the bidirectional approach
whereas resolution steps by unification correspond to type synthesis. 

In future work,
we would like to consider additional constructs in the
surface language. One example of such a construct would be a type-class mechanism, as found
in \eg Haskell.
\citeN{FuKSP16} have previously demonstrated that type class resolution in
Haskell can be addressed by proof-relevant resolution in Horn clause logic with
explicit proof terms. This result suggests a
natural extension of our work by adding a new atomic formula that represents
type-class judgement and by adjusting refinement calculus with inference rules
for the translation of type-class judgements. Recently, \citeN{BottuKSOW17} argued for
quantification on type-class constraints.  Although such quantification
escapes the Horn clause fragment as discussed in the work of Fu \etal, it can be
addressed in the logic of hereditary Harrop formulae. Our approach scales well by
replacing the target logic by the logic of first-order hereditary Harrop
formulae. Finally, we believe that our suggested approach to refinement can also
serve as a viable method for proving the soundness of the surface language w.r.t. a
semantic model. We have already presented some initial results \cite{FarkaKH16} that show
that proof-relevant type class
resolution is sound w.r.t. to both 
inductive and coinductive interpretation of type class instances.

\section{Conclusion}
\label{sec:conclusion}

Functional programming languages are a convenient tool for developing
software.  Dependent types in particular allow various semantic
properties to be encoded in types.  However, as evidenced by languages
such as Agda, Coq or Idris, software development in a functional
programming language with dependent types is a complex task. The usability
of such languages critically depends on the amount of assistance that an
automated tool provides to a programmer: at a minimum, some 
type refinement is necessary. Current descriptions are implementation
dependent and hard to understand. This complicates the reuse of existing
approaches in the development of tools for new languages.  Ultimately, it
is problematic for a programmer as they need to understand the
elaboration process to some extent.  We present a description of
refinement in LF that is significantly simpler than the existing ones. We
show a translation of an incomplete term with metavariables to a goal
and a program in Horn clause logic by a syntactic traversal of the
term. The inference  is then performed by proof-relevant resolution that is an
extension of the standard first-order resolution with proof term construction. Moreover, the
generated goal and program have a straightforward interpretation as
judgements of type theory and inference rules and
hence can be easier to understand and work with.

Our discussion of related work shows how our approach links to the state of the art in
refinement in functional languages. We have suggested two different areas for future
work. Firstly, a more detailed analysis of resolution on generated goals and
programs may recover bidirectionality. Secondly, we have discussed a possible extension
to the surface language, higher order features and type classes, and argued that
our approach scales well with extensions of the surface language. However, 
extensions of the surface language may require a stronger logic as
a target logic of the refinement calculus.


\paragraph{Acknowledgements}
{ \footnotesize
This work has been supported by the EPSRC grant
``Coalgebraic Logic Programming for Type Inference'' EP/K031864/1-2, EU
Horizon 2020 grant ``RePhrase: Refactoring Parallel Heterogeneous Resource-Aware
    Applications - a Software Engineering Approach'' (ICT-644235), and by
COST Action IC1202 (TACLe), supported by COST (European Cooperation in
Science and Technology).
%
We would like to thank Andreas Abel,
Greg Michaelson, and anonymous referees
for helpful comments on
earlier versions of the paper.
}

\bibliographystyle{acmtrans}
\bibliography{references}

\begin{thebibliography}{}

\bibitem[\protect\citeauthoryear{Ahn and Vezzosi}{Ahn and
  Vezzosi}{2016}]{AhnV16}
{\sc Ahn, K.~Y.} {\sc and} {\sc Vezzosi, A.} 2016.
\newblock Executable relational specifications of polymorphic type systems
  using prolog.
\newblock See \citeN{DBLP:conf/flops/2016}, 109--125.

\bibitem[\protect\citeauthoryear{Bottu, Karachalias, Schrijvers,
  d.~S.~Oliveira, and Wadler}{Bottu et~al\mbox{.}}{2017}]{BottuKSOW17}
{\sc Bottu, G.}, {\sc Karachalias, G.}, {\sc Schrijvers, T.}, {\sc
  d.~S.~Oliveira, B.~C.}, {\sc and} {\sc Wadler, P.} 2017.
\newblock Quantified class constraints.
\newblock See \citeN{DBLP:conf/haskell/2017}, 148--161.

\bibitem[\protect\citeauthoryear{Brady}{Brady}{2013}]{Brady13}
{\sc Brady, E.} 2013.
\newblock Idris, a general-purpose dependently typed programming language:
  Design and implementation.
\newblock {\em J. Funct. Program.\/}~{\em 23,\/}~5, 552--593.

\bibitem[\protect\citeauthoryear{Diatchki}{Diatchki}{2017}]{DBLP:conf/haskell/2017}
{\sc Diatchki, I.~S.}, Ed. 2017.
\newblock {\em Proceedings of the 10th {ACM} {SIGPLAN} International Symposium
  on Haskell, Oxford, United Kingdom, September 7-8, 2017}. {ACM}.

\bibitem[\protect\citeauthoryear{Farka, Komendantskaya, and Hammond}{Farka
  et~al\mbox{.}}{2016}]{FarkaKH16}
{\sc Farka, F.}, {\sc Komendantskaya, E.}, {\sc and} {\sc Hammond, K.} 2016.
\newblock Coinductive soundness of corecursive type class resolution.
\newblock In {\em Logic-Based Program Synthesis and Transformation - 26th
  International Symposium, {LOPSTR} 2016, Edinburgh, UK, September 6-8, 2016,
  Revised Selected Papers}, {M.~V. Hermenegildo} {and}
  {P.~L{\'{o}}pez{-}Garc{\'{\i}}a}, Eds. Lecture Notes in Computer Science,
  vol. 10184. Springer, 311--327.

\bibitem[\protect\citeauthoryear{Fu and Komendantskaya}{Fu and
  Komendantskaya}{2017}]{FuK17}
{\sc Fu, P.} {\sc and} {\sc Komendantskaya, E.} 2017.
\newblock Operational semantics of resolution and productivity in horn clause
  logic.
\newblock {\em Formal Asp. Comput.\/}~{\em 29,\/}~3, 453--474.

\bibitem[\protect\citeauthoryear{Fu, Komendantskaya, Schrijvers, and Pond}{Fu
  et~al\mbox{.}}{2016}]{FuKSP16}
{\sc Fu, P.}, {\sc Komendantskaya, E.}, {\sc Schrijvers, T.}, {\sc and} {\sc
  Pond, A.} 2016.
\newblock Proof relevant corecursive resolution.
\newblock See \citeN{DBLP:conf/flops/2016}, 126--143.

\bibitem[\protect\citeauthoryear{Gonthier and Mahboubi}{Gonthier and
  Mahboubi}{2010}]{GonthierM10}
{\sc Gonthier, G.} {\sc and} {\sc Mahboubi, A.} 2010.
\newblock An introduction to small scale reflection in {Coq}.
\newblock {\em J. Formalized Reasoning\/}~{\em 3,\/}~2, 95--152.

\bibitem[\protect\citeauthoryear{Harper and Pfenning}{Harper and
  Pfenning}{2005}]{HarperP05}
{\sc Harper, R.} {\sc and} {\sc Pfenning, F.} 2005.
\newblock On equivalence and canonical forms in the {LF} type theory.
\newblock {\em {ACM} Trans. Comput. Log.\/}~{\em 6,\/}~1, 61--101.

\bibitem[\protect\citeauthoryear{Hemann, Friedman, Byrd, and Might}{Hemann
  et~al\mbox{.}}{2016}]{HemannFBM16}
{\sc Hemann, J.}, {\sc Friedman, D.~P.}, {\sc Byrd, W.~E.}, {\sc and} {\sc
  Might, M.} 2016.
\newblock A small embedding of logic programming with a simple complete search.
\newblock In {\em Proceedings of the 12th Symposium on Dynamic Languages, {DLS}
  2016, Amsterdam, The Netherlands, November 1, 2016}, {R.~Ierusalimschy}, Ed.
  {ACM}, 96--107.

\bibitem[\protect\citeauthoryear{Karachalias and Schrijvers}{Karachalias and
  Schrijvers}{2017}]{KarachaliasS17}
{\sc Karachalias, G.} {\sc and} {\sc Schrijvers, T.} 2017.
\newblock Elaboration on functional dependencies: functional dependencies are
  dead, long live functional dependencies!
\newblock See \citeN{DBLP:conf/haskell/2017}, 133--147.

\bibitem[\protect\citeauthoryear{Kiselyov and King}{Kiselyov and
  King}{2016}]{DBLP:conf/flops/2016}
{\sc Kiselyov, O.} {\sc and} {\sc King, A.}, Eds. 2016.
\newblock {\em Functional and Logic Programming - 13th International Symposium,
  {FLOPS} 2016, Kochi, Japan, March 4-6, 2016, Proceedings}. Lecture Notes in
  Computer Science, vol. 9613. Springer.

\bibitem[\protect\citeauthoryear{Miller and Nadathur}{Miller and
  Nadathur}{2012}]{MillerN12}
{\sc Miller, D.} {\sc and} {\sc Nadathur, G.} 2012.
\newblock {\em Programming with Higher-Order Logic}.
\newblock Cambridge University Press.

\bibitem[\protect\citeauthoryear{Milner}{Milner}{1978}]{Milner78}
{\sc Milner, R.} 1978.
\newblock A theory of type polymorphism in programming.
\newblock {\em Journal of Computer and System Sciences\/}~{\em 17}, 348--375.

\bibitem[\protect\citeauthoryear{Odersky, Sulzmann, and Wehr}{Odersky
  et~al\mbox{.}}{1999}]{OderskySW99}
{\sc Odersky, M.}, {\sc Sulzmann, M.}, {\sc and} {\sc Wehr, M.} 1999.
\newblock Type inference with constrained types.
\newblock {\em {TAPOS}\/}~{\em 5,\/}~1, 35--55.

\bibitem[\protect\citeauthoryear{Peyton~Jones, Jones, and Meijer}{Peyton~Jones
  et~al\mbox{.}}{1997}]{PeytonJonesJM97}
{\sc Peyton~Jones, S.}, {\sc Jones, M.}, {\sc and} {\sc Meijer, E.} 1997.
\newblock Type classes: an exploration of the design space.
\newblock In Haskell workshop.

\bibitem[\protect\citeauthoryear{Pientka}{Pientka}{2013}]{Pientka13}
{\sc Pientka, B.} 2013.
\newblock An insider's look at {LF} type reconstruction: everything you (n)ever
  wanted to know.
\newblock {\em J. Funct. Program.\/}~{\em 23,\/}~1, 1--37.

\bibitem[\protect\citeauthoryear{Pientka and Dunfield}{Pientka and
  Dunfield}{2010}]{PientkaD10}
{\sc Pientka, B.} {\sc and} {\sc Dunfield, J.} 2010.
\newblock Beluga: {A} framework for programming and reasoning with deductive
  systems (system description).
\newblock In {\em Automated Reasoning, 5th International Joint Conference,
  {IJCAR} 2010, Edinburgh, UK, July 16-19, 2010. Proceedings}, {J.~Giesl} {and}
  {R.~H{\"{a}}hnle}, Eds. Lecture Notes in Computer Science, vol. 6173.
  Springer, 15--21.

\bibitem[\protect\citeauthoryear{Slama and Brady}{Slama and
  Brady}{2017}]{SlamaB17}
{\sc Slama, F.} {\sc and} {\sc Brady, E.} 2017.
\newblock Automatically proving equivalence by type-safe reflection.
\newblock In {\em Intelligent Computer Mathematics - 10th International
  Conference, {CICM} 2017, Edinburgh, UK, July 17-21, 2017, Proceedings},
  {H.~Geuvers}, {M.~England}, {O.~Hasan}, {F.~Rabe}, {and} {O.~Teschke}, Eds.
  Lecture Notes in Computer Science, vol. 10383. Springer, 40--55.

\bibitem[\protect\citeauthoryear{Stuckey and Sulzmann}{Stuckey and
  Sulzmann}{2002}]{StuckeyS02}
{\sc Stuckey, P.~J.} {\sc and} {\sc Sulzmann, M.} 2002.
\newblock A theory of overloading.
\newblock In {\em Proceedings of the Seventh {ACM} {SIGPLAN} International
  Conference on Functional Programming {(ICFP} '02), Pittsburgh, Pennsylvania,
  USA, October 4-6, 2002.}, {M.~Wand} {and} {S.~L.~P. Jones}, Eds. {ACM},
  167--178.

\bibitem[\protect\citeauthoryear{Sulzmann and Stuckey}{Sulzmann and
  Stuckey}{2008}]{SulzmannS08}
{\sc Sulzmann, M.} {\sc and} {\sc Stuckey, P.~J.} 2008.
\newblock {HM(X)} type inference is {CLP(X)} solving.
\newblock {\em J. Funct. Program.\/}~{\em 18,\/}~2, 251--283.

\bibitem[\protect\citeauthoryear{Vazou, Tondwalkar, Choudhury, Scott, Newton,
  Wadler, and Jhala}{Vazou et~al\mbox{.}}{2018}]{VazouTCSNWJ18}
{\sc Vazou, N.}, {\sc Tondwalkar, A.}, {\sc Choudhury, V.}, {\sc Scott, R.~G.},
  {\sc Newton, R.~R.}, {\sc Wadler, P.}, {\sc and} {\sc Jhala, R.} 2018.
\newblock Refinement reflection: complete verification with {SMT}.
\newblock {\em {PACMPL}\/}~{\em 2,\/}~{POPL}, 53:1--53:31.

\bibitem[\protect\citeauthoryear{Weirich, Voizard, de~Amorim, and
  Eisenberg}{Weirich et~al\mbox{.}}{2017}]{WeirichVAE17}
{\sc Weirich, S.}, {\sc Voizard, A.}, {\sc de~Amorim, P. H.~A.}, {\sc and} {\sc
  Eisenberg, R.~A.} 2017.
\newblock A specification for dependent types in haskell.
\newblock {\em {PACMPL}\/}~{\em 1,\/}~{ICFP}, 31:1--31:29.

\end{thebibliography}








\end{document}